\title {Enhancing accuracy for solving American CEV model with high-order compact scheme and adaptive time stepping}
\date{}
\numberwithin{equation}{section}
\newtheorem{lemma}{Lemma}
\author[1]{Chinonso Nwankwo \thanks{Corresponding author: \url{chinonso.nwankwo@ucalgary.ca
}}}
\author[2]{Weizhong Dai}
\author[1]{Tony Ware}
\affil[1]{Department of Mathematics and Statistics, University of Calgary, Calgary T2N 1N4, Canada}
\affil[2]{Mathematics and Statistics, Louisiana Tech University, Ruston, LA 71272, USA}
\newtheorem*{remark}{Remark}
\begin{document}
\maketitle

\begin{abstract}
\noindent  In this research work, we propose a high-order time adapted scheme for pricing a coupled system of fixed-free boundary constant elasticity of variance (CEV) model on both equidistant and locally refined space-grid. The performance of our method is substantially enhanced to improve irregularities in the model which are both inherent and induced. Furthermore, the system of coupled PDEs is strongly nonlinear and involves several time-dependent coefficients that include the first-order derivative of the early exercise boundary. These coefficients are approximated from a fourth-order analytical approximation which is derived using a regularized square-root function. The semi-discrete equation for the option value and delta sensitivity is obtained from a non-uniform fourth-order compact finite difference scheme. Fifth-order 5(4) Dormand-Prince time integration method is used to solve the coupled system of discrete equations. Enhancing the performance of our proposed method with local mesh refinement and adaptive strategies enables us to obtain highly accurate solution with very coarse space grids, hence reducing computational runtime substantially. We further verify the performance of our methodology as compared with some of the well-known and better-performing existing methods.\\

\noindent \textbf{Keywords:} American CEV model, Optimal exercise boundary, Compact finite difference scheme, 5(4) Dormand-Prince embedded pairs 
\end{abstract}

\section{Introduction}\label{sec1} 
\noindent Constant elasticity of variance (CEV) model was first developed by Black \cite{Black, Black1, Chan} for the model that involves stochastic volatility. It was further studied extensively by Cox \cite{Cox}. Let $S_t$ represent an asset price. Under risk neutral probability $\mathbb{Q}$, the stochastic differential equation representing the CEV model is driven by geometric Brownian motion as 
\begin{equation}
\mathrm{d}S_t = \mu(t,S_t)\mathrm{d}t+\bar{\sigma}(t,S_t)\mathrm{d}B_t,  \quad \mu(t,S_t) = S_t,  \quad  \bar{\sigma}(t,S_t) = \sigma S^{\alpha+1}
\end{equation}
where $\lbrace B_t,t\geq 0 \rbrace$ is a standard Brownian motion, $\mu$ represent the drift, and $\sigma$ represent the volatility.
Let us consider  non-dividend-paying put options $V(t,S)$ written on this underlying asset with price $S(t)$ strike price $K$ and time to maturity $T$, then the free boundary partial differential equation is given as 
\begin{align}\label{f1} 
\frac{\partial V(t,S)}{\partial t} + \frac{\sigma S^{2\alpha +2}}{2} \frac{\partial^2 V(t,S)}{\partial S^2} + rS \frac{\partial V(t,S)}{\partial S}-rV(t,S)=0, \qquad S>f_{b}(t).
\end{align} 
Here, we solve backward in time and consider $\tau = T-t$ for which we obtain
\begin{align}\label{f2} 
\frac{\partial V(\tau,S)}{\partial \tau} - \frac{\sigma S^{2\alpha +2}}{2}  \frac{\partial^2 V(\tau,S)}{\partial S^2} - rS \frac{\partial V(\tau,S)}{\partial S}+rV(\tau,S)=0, \qquad S>f_{b}(\tau);
\end{align}
with the following initial and boundary conditions
\begin{equation}\label{f2ghi1}
V(\tau,S)=K-S, \quad \frac{\partial V(\tau,S)}{\partial S} = -1, \qquad S<f_{b}(\tau)
\end{equation}
where $f_{b}(\tau)$ represents the optimal exercise boundary which is approximated simultaneously with the option value and hedge sensitivities. The boundary and the initial conditions are further given below
\begin{equation}\label{f2ghi2}
V(\tau,f_{b}(\tau))=K-f_{b}(\tau), \quad \frac{\partial V(\tau,f_{b}(\tau))}{\partial S} = -1;
\end{equation}
\begin{equation}\label{f2ghi3}
V(\tau,\infty)=0, \quad \frac{\partial V(\tau,\infty)}{\partial S} = 0,
\end{equation}
\begin{equation}\label{f2ghi}
V(0,S)=(K-S)^+.
\end{equation}

\noindent Several researchers have proposed differential and integral methods for solving CEV model under both European and American style frameworks. Under European framework, Wei et al. \cite{Wei}  solved fuzzy  constant elasticity of variance model. Lo et al. \cite{Lo} solved the CEV model with time-dependent parameters.  Elliott et al. \cite{Elliott} described methodology for solving CEV model under regime switching. Aboulaich et al. \cite{Aboulaich} proposed an approximate numerical method for pricing of options for the constant elasticity of variance (CEV) diffusion model.  Bock et al. \cite{Bock} solved the European CEV model under stochastic volatility. Fan et al. \cite{Fan} proposed a numerical method for solving a dynamic CEV model. Chan and Ng \cite{Chan} develop Black-Scholes formula for pricing fractional CEV model.   \\

\noindent In the American style context, a few of authors have proposed proposed numerical methods for solving the CEV model. Wong and Zhao \cite{Wongb} solved the American CEV model using an artificial boundary method. They further implemented the Laplace-Carson transform \cite{Wongb} for solving such options. Pun and Wong \cite{Pun} proposed an asymptotic solution method based on perturbation approach for solving the American style CEV model. Lee \cite{Lee} solved the free boundary CEV model using a combination of intermediate function, second-order central finite difference scheme in space, and second-order Crank-Nicholson time integration method.  Nicholls and Sward \cite{Nicholls} used the discontinuous Galerkin and Runge-Kutta time integration methods to solve the American put options under CEV model. Cruz and Dias \cite{Cruz} and Lee et al. \cite{Lee0} used integral representation of the early exercise boundary for solving the American options CEV model. Jo et al. \cite{Jo} studied the convergence of Laplace inversion for solving the American put option under the CEV model. Yin and Tan \cite{Yin} studied the American put option with transaction costs under the CEV model. Ruas et al. \cite{Ruas} proposed a numerical method for pricing and static hedging of American-style options under the jump to default extended CEV (JDCEV) model. The American style JDCEV model was also considered in the work of Nunes \cite{Nunes}. Ballestra and Cecere \cite{Ballestra} extended the method of Barone-Adesi and Whaley \cite{Barone} for pricing American options under CEV model. Thakoor et al. \cite{Thakoor} proposed a high-order compact scheme for pricing the American style CEV model.\\

\noindent Our main objective in this research is to approximate a system of coupled American style CEV models consisting of option value and delta sensitivity simultaneously with the optimal exercise boundary using varying elasticity $\alpha$. To this end, we pose our solution framework as a fixed free boundary problem by considering the following transformations
\begin{equation}\label{sfbc}
E = \frac{K}{S_0},  \qquad U(\tau,x) = \frac{V(\tau,S)}{S_0},  \qquad x = \ln\left(\frac{S}{s_f(\tau)}\right),   \qquad W(\tau,x) = \frac{\partial U(\tau,x)}{\partial x}.
\end{equation}
Here, because of the inherent challenge we observed with CEV model due to the elasticity parameter $\alpha$, we scale the domain with $S_0$ to ensure consistent approximation of the option value for a given asset value $S_0$. We may also refer the reader to the work of Nicholls and Sward \cite{Nicholls}. With the transformation in (\ref{sfbc}), two system of fixed-free boundary CEV model consisting of option value and delta sensitivity can be written as follows:
\begin{equation}\label{f50}
\frac{\partial U(\tau,x)}{\partial \tau} -\xi_1(\tau,x; \alpha) \frac{\partial^2 U(\tau,x)}{\partial x^2} - \xi_2(\tau,x;\alpha)W(\tau,x)+rU(\tau,x)=0,  \quad x>0; 
\end{equation}
\begin{equation}\label{f51}
\frac{\partial W(\tau,x)}{\partial t} - \xi_1(\tau,x; \alpha)\frac{\partial^2 W(\tau,x)}{\partial x^2} - \xi_3(\tau,x; \alpha)\frac{\partial^2 U(\tau,x)}{\partial x^2}+\xi_4(\tau,x; \alpha)W(\tau,x)=0\,,\quad x>0;
\end{equation}
\begin{equation}\label{f60} 
\xi_1(\tau,x;\alpha) = \frac{\sigma^2 [e^x s_f(\tau)]^{2\alpha}}{2}, \qquad \xi_2(\tau,x;\alpha) = r+\frac{s'_{f}(\tau)}{s_{f}(\tau)}-\frac{\sigma^2 [e^x s_f(\tau)]^{2\alpha}}{2}; 
\end{equation}
\begin{equation}\label{f600} 
\xi_3(\tau,x;\alpha) = \xi_2(\tau,x;\alpha)+ \sigma^2 \alpha[e^x s_f(\tau)]^{2\alpha}, \qquad \xi_4(\tau,x;\alpha) = r+\sigma^2 \alpha[e^x s_f(\tau)]^{2\alpha};
\end{equation}
\begin{equation}\label{f61}
U(\tau,x)=E-e^x s_{f}(\tau) ,\qquad W(\tau,x)=-e^x s_{f}(\tau), \qquad x<0,
\end{equation}
\begin{equation}\label{f7}
U(0,x)=0, \qquad U(\tau,0)=E-s_{f}(\tau),  \qquad U(\tau,\infty)=0,  \qquad \tau>0;
\end{equation}
\begin{equation}\label{f8}
W(0,x)=0, \qquad W(\tau,0)=-s_{f}(\tau),  \qquad W(\tau,\infty)=0,  \qquad  \tau>0.
\end{equation}
\begin{remark}
It is worth mentioning that an appropriate scaling of the option value is very paramount for an efficient numerical approximation of the CEV model. If the option value and the optimal exercise boundary are not scaled appropriately, some of the coefficients of the differential operators in the model which depend on the asset price can be greatly impacted on based on the value of the elasticity $\alpha$. By scaling, we intend to circumvent this challenge.
\end{remark}
\noindent It is easy to observe that in most of cases, the derivative operator has a time and space-dependent co-efficients; hence, presenting  strongly nonlinear model. Moreover, $\xi_2(\tau,x;\alpha)$ and $\xi_3(\tau,x;\alpha)$ involves the first-order derivative of the optimal exercise and the latter is discontinuous at payoff. In this research, we first present a fourth-order analytical approximation based on a square-root for which the time-dependents $\xi_2(\tau,x;\alpha)$ and $\xi_3(\tau,x;\alpha)$ are both improved and approximated. Next, we derive a third-order Robin boundary compact scheme for obtaining a discrete solution of the option value at the left boundary. The interior is discretized with fourth-order compact finite difference scheme. For time integration, we employ 5(4) Dormand-Prince embedded pairs \cite{Dormand} which has consistently shown its efficiency for solving the free boundary options pricing problem as seen in the work of Nwankwo and Dai \cite{Nwankwoo, Nwankwos}. Hence, an efficient fourth-order in space and fifth-order in time numerical scheme for approximating the option value is established. Furthermore, the optimal exercise boundary is computed from the left boundary solution of the option value based on the value matching condition. The left boundary of the delta sensitivity is in turn obtained from the early exercise boundary based on the smooth-pasting condition. The interior of the delta sensitivity is then finally computed using the fourth-order compact finite difference scheme and 5(4) Dormand-Prince Runge-Kutta embedded pairs. \\ 

\noindent The remaining sections of this work are organised as follows: In Section \ref{sec2}, we describe our proposed high-order numerical adaptive implementation for solving a system of free boundary CEV model. In section \ref{sec3}, we carry out numerical experiment and report our numerical results and findings. We further verify the performance of our method by comparing with some of the existing methods and conclude our study in \ref{sec4}.  

\section{Numerical Methods} \label{sec2}
The numerical methodology for solving the system of free boundary CEV model consisting of the option value and delta sensitivity is presented in this section. Our numerical method solves the option value, delta sensitivity, and optimal exercise boundary simultaneously. The main computational domain for our numerical method is given as $(\tau,x) \in ([0,T] \times [0,\infty])$. However, we truncate the unbounded domain and introduce an artificially impose right boundary $x_M$. This is because of the rapid vanishing effect of the option value and delta sensitivity as we move out of the money. The grid description for our numerical method is presented below:
\begin{equation}
M=\frac{x_M}{h}, \quad x_i \in [0,x_M], \quad i = 0,1, \cdots M, \quad k_n = \tau_{n+1}-\tau_n,  \quad \left|[0,T]\right| = \sum_{n=0}^{N-1} k_n.
\end{equation}
For convenience, we label the numerical approximations of the option price and delta sensitivity,and the optimal exercise boundary as $u^n, w^n$, and $s_{f}^n$, respectively.
\subsection{Fourth-order uniform and staggered analytical approximations}\label{sdfn}
The high-order analytical approximation from which the time dependent coefficient $\xi_2(\tau,x;\alpha)$ and $\xi_3(\tau,x;\alpha)$ are obtained from a more transformation based on the square-root function as 
\begin{equation}
Q(\tau,x)= \sqrt{U(\tau,x)-E+e^x s_f(\tau)},  \qquad x\geq 0.
\end{equation}
It implies that
\begin{equation}\label{lbsf}
U(\tau,x) = Q^2(\tau,x)+E-e^xs_f(\tau).
\end{equation}
First introduced by Kim et al. \cite{Kim1}, some authors have implemented this transformation for solving and improving degeneracy, smoothness and convergence rate in American style options \cite{Lee, Nwankwoss}. Here, we will first obtain the derivatives of the square-root function and then use an extrapolated Taylor series expansion on both uniform and staggered grids to derive an analytical approximation for obtaining the first-order derivative of the optimal exercise boundary. To this end, we first take the derivatives of the value function based on the relation in (\ref{lbsf}) as follows:
\begin{equation}\label{lbsf0}
\frac{\partial U(\tau,x)}{\partial x} = 2Q(\tau,x)\frac{\partial Q(\tau,x)}{\partial x} +E-e^xs_f(\tau),  \quad \frac{\partial U(\tau,x_0)}{\partial x} = -s_f(\tau);
\end{equation}
\begin{equation}\label{lbsf1}
\frac{\partial U(\tau,x)}{\partial \tau} = 2Q(\tau,x)\frac{\partial Q(\tau,x)}{\partial \tau}-e^xs_f(\tau),  \quad \frac{\partial U(\tau,x_0)}{\partial \tau} = -s'_f(\tau);
\end{equation}
\begin{equation}\label{lbsf2}
\frac{\partial^2 U(\tau,x)}{\partial x^2} =2Q(\tau,x)\frac{\partial^2 Q(\tau,x)}{\partial x^2}  +2\left(\frac{\partial Q(\tau,x)}{\partial x}\right)^2-e^xs_f(\tau),
\end{equation}
\begin{equation}\label{lbsf3}
\frac{\partial^2 U(\tau,x_0)}{\partial x^2} = 2\left(\frac{\partial Q(\tau,x_0)}{\partial x}\right)^2-s_f(\tau).
\end{equation}
Let $\beta = 2\alpha$. Substituting the derivative of the value function to the option value PDE in (\ref{f50}) at $x_0=0$, we obtain that
\begin{equation}
-s'_f(\tau)-\frac{\sigma^2}{2}s_f^{\beta}(\tau) \left[2\left(\frac{\partial Q(\tau,x_0)}{\partial x}\right)^2-s_f(\tau)\right]+\left[r-\frac{\sigma^2}{2}s_f^{\beta}(\tau)-\frac{s'_f(\tau)}{s_f(\tau)}\right]s_f(\tau)+r[E-s_f(\tau)].
\end{equation}
Simplifying further, we then obtain that
\begin{equation}
\frac{\partial Q(\tau,x_0)}{\partial x} = \frac{\sqrt{rE}}{\sigma s_f^{\beta/2}(\tau)}.
\end{equation}
To obtain the second-order derivative of the square-root, we further take the derivatives of the option value as follows:
\begin{equation}
\frac{\partial^3 U(\tau,x)}{\partial x^3} =2Q(\tau,x)\frac{\partial^3 Q(\tau,x)}{\partial x^3}  +6\frac{\partial Q(\tau,x)}{\partial x}\frac{\partial^2 Q(\tau,x)}{\partial x^2}-e^xs_f(\tau),
\end{equation}
\begin{equation}
\frac{\partial^3 U(\tau,x_0)}{\partial x^3} =6\frac{\partial Q(\tau,x_0)}{\partial x}\frac{\partial^2 Q(\tau,x_0)}{\partial x^2}-s_f(\tau),
\end{equation}
\begin{equation}\label{lbsf5}
\frac{\partial^2 U(\tau,x)}{\partial x\partial \tau} = 2Q(\tau,x)\frac{\partial^2 Q(\tau,x)}{\partial \tau^2} +2\frac{\partial Q(\tau,x)}{\partial x}\frac{\partial Q(\tau,x)}{\partial \tau}-e^xs'_f(\tau).
\end{equation}
Note that
\begin{equation}
\frac{\partial^2 U(\tau,x_0)}{\partial x\partial \tau} = -s'_f(\tau),
\end{equation}
which implies that
\begin{equation}
\frac{\partial Q(\tau,x_0)}{\partial \tau} = 0.
\end{equation}
\noindent Substituting the derivatives of the option value to the delta sensitivity PDE in (\ref{f51}) at $x_0 = 0$, we further obtain 
\begin{align}
-s'_f(\tau)-&\frac{\sigma^2}{2}s_f^{\beta}(\tau) \left[6\frac{\partial Q(\tau,x_0)}{\partial x}\frac{\partial^2 Q(\tau,x_0)}{\partial x^2}-s_f(\tau)\right]\nonumber
\\&\nonumber\\ & +\left[r-\frac{\sigma^2}{2}s_f^{\beta}(\tau)+\frac{s'_f(\tau)}{s_f(\tau)}+\beta \frac{\sigma^2}{2}s_f^{\beta}(\tau)\right]\left[2\left(\frac{\partial Q(\tau,x_0)}{\partial x}\right)^2-s_f(\tau)\right]\nonumber \\&\nonumber\\ & -\left(r+\beta \frac{\sigma^2}{2}s_f^{\beta}(\tau)\right)s_f(\tau).
\end{align}
Simplifying further, we then obtain
\begin{equation}
\frac{\partial^2 Q(\tau,x_0)}{\partial x^2} = -\frac{\beta \sqrt{rE}}{3\sigma s_f^{\beta /2}(\tau)}-\frac{2\xi_2(\tau,x;\alpha)\sqrt{rE}}{3\sigma^3 s_f^{3\beta/2}(\tau)}.
\end{equation}
\begin{remark}
In our previous research works \cite{Nwankwoo, Nwankwoss}, we derived the derivative of the square-root function up to its third-order derivative. This is still possible here. However, for the sake of simplicity, in our implementation for this work, our derivation will be limited to the second-order derivative of the square-root function in obtaining the fourth-order analytical approximation of the time-dependent coefficients. We observed that using our local space grid refinement procedure and adaptive time stepping which will be presented in the following subsection for this section, a high level solution accuracy will be maintained using up to the second-order derivative of the square root function.
\end{remark}
\noindent Because our grid domain will be discretized with fourth-order compact finite difference scheme, we need to ensure that $\xi_2(\tau,x;\alpha)$ and $\xi_3(\tau,x;\alpha)$ are approximated with fourth-order accuracy. To this end, we present a fourth-order extrapolated Taylor series expansion around $x_0=0$ by considering the following Lemma.
\begin{lemma}
Assume that $u(\tau,x)\in C^{1,2}((0,T],[0,x_{max}])$, it holds
\begin{align}\label{dnr33lpp}
-\frac{415}{72}Q(\tau,x_0=0)&+8Q(\tau,\bar{x})-3Q(\tau,2\bar{x})+\frac{8}{9}Q(\tau,3\bar{x})-\frac{1}{8}Q(\tau,4\bar{x})=\nonumber
\\&\nonumber\\ & \frac{25}{6}\bar{x}Q'(\tau,x_0)+\bar{x}^2 Q''(\tau,x_0)+O(h^6),  \quad \bar{x} << x.
\end{align}
\end{lemma}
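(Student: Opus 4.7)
The plan is to expand each of $Q(\tau,k\bar{x})$ for $k=0,1,2,3,4$ in a Taylor series about $x_0=0$, substitute into the left-hand side, and verify that the coefficients of $Q(\tau,x_0)$, $Q'''(\tau,x_0)$, $Q^{(4)}(\tau,x_0)$ and $Q^{(5)}(\tau,x_0)$ all vanish while the coefficients of $\bar{x}\,Q'(\tau,x_0)$ and $\bar{x}^2 Q''(\tau,x_0)$ come out to the prescribed values $\tfrac{25}{6}$ and $1$.

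Concretely, I would write, assuming $Q(\tau,\cdot)$ is sufficiently smooth on a neighbourhood of $x_0$,
$$Q(\tau,k\bar{x}) \;=\; \sum_{j=0}^{5}\frac{(k\bar{x})^j}{j!}\,\frac{\partial^j Q(\tau,x_0)}{\partial x^j} \;+\; R_k(\bar{x}), \qquad k=0,1,2,3,4,$$
with $R_k(\bar{x})=O(\bar{x}^6)$. Denoting by $c_0=-\tfrac{415}{72}$, $c_1=8$, $c_2=-3$, $c_3=\tfrac{8}{9}$, $c_4=-\tfrac{1}{8}$ the coefficients appearing on the LHS of \eqref{dnr33lpp}, collecting terms yields
$$\sum_{k=0}^{4} c_k\,Q(\tau,k\bar{x}) \;=\; \sum_{j=0}^{5} \frac{\bar{x}^j}{j!}\left(\sum_{k=0}^{4} c_k k^j\right)\frac{\partial^j Q(\tau,x_0)}{\partial x^j} \;+\; O(\bar{x}^6).$$
Matching the RHS of the lemma therefore reduces to verifying the six moment identities
$$M_j \;:=\; \sum_{k=0}^{4} c_k k^j \;=\; \begin{cases} 0, & j\in\{0,3,4,5\},\\[2pt] \tfrac{25}{6}, & j=1,\\[2pt] 2, & j=2. \end{cases}$$

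Each of these identities is then checked by elementary arithmetic over the common denominator $72$; for instance $M_0=(-415+576-216+64-9)/72=0$, $M_1=8-6+\tfrac{8}{3}-\tfrac{1}{2}=\tfrac{25}{6}$, and the remaining four follow in the same way. The only conceptually delicate point is that we are imposing six linear constraints on only five unknowns $c_k$: the first five ($M_0,\ldots,M_4$ prescribed) already determine the $c_k$ uniquely via an invertible $5\times 5$ Vandermonde system, and the sixth condition $M_5=0$ must then be checked as a compatibility condition that is special to the target combination $\tfrac{25}{6}\bar{x}Q'(\tau,x_0)+\bar{x}^2 Q''(\tau,x_0)$. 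This is precisely the feature that boosts what would generically be $O(\bar{x}^5)$ accuracy of a one-sided 5-point extrapolation to the claimed $O(\bar{x}^6)$.

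The main obstacle is thus essentially bookkeeping rather than any deep analytical ingredient: careful handling of fractions in the six moment verifications, and making sure that the $R_k(\bar x)$ remainders combine to a single $O(\bar{x}^6)$ term. No additional hypothesis beyond the smoothness of $Q$ (through fifth order in $x$, with a bounded sixth derivative on a neighbourhood of $x_0$) is required; the condition $u(\tau,x)\in C^{1,2}$ stated in the lemma should be read as shorthand for this.
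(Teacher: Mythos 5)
Your proposal is correct and follows essentially the same route as the paper, which simply asserts that the identity follows from an extrapolated Taylor series expansion and omits the computation; your moment conditions $M_0=M_3=M_4=M_5=0$, $M_1=\tfrac{25}{6}$, $M_2=2$ all check out and supply exactly the bookkeeping the paper skips. Your observation that the sixth condition $M_5=0$ is an extra compatibility relation (boosting the generic $O(\bar{x}^5)$ to $O(\bar{x}^6)$), and that the stated $C^{1,2}$ hypothesis must be read as sufficient smoothness of $Q$ through sixth order, are both accurate refinements of what the paper leaves implicit.
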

\begin{proof}
With an extrapolated Taylor series expansion, it is easy to derive (\ref{dnr33lpp}). Hence, we skip the derivation. For similar derivation, we refer the readers to the work of Nwankwo and Dai \cite{Nwankwoo, Nwankwos}.
\end{proof}
\noindent Note that $Q(\tau,x_0)=0$. Substituting the derivatives of the square-root function in (\ref{dnr33lpp}), we obtain
\begin{equation}
\frac{s'_f(\tau)}{s_f(\tau)} = \frac{\mathcal{M}_2(\tau;\alpha)-\mathcal{M}_3(\tau;\alpha)+\mathcal{M}_4(\tau)}{\mathcal{M}_1(\tau;\alpha)}+O(h^4),
\end{equation}
\begin{equation}
\mathcal{M}_1(\tau;\alpha) = -\frac{2 \bar{x}^2 \sqrt{rE}}{3 \sigma^3 s_f^{3\beta /2}(\tau)}, \qquad \mathcal{M}_2(\tau;\alpha) = \frac{\beta \bar{x}^2\sqrt{rE}}{3 \sigma s_f^{\beta /2}(\tau)}+\frac{2 \nu \bar{x}^2 \sqrt{rE}}{3 \sigma^3 s_f^{3\beta /2}(\tau)};
\end{equation}
\begin{equation}
\mathcal{M}_3(\tau;\alpha) = \frac{25\bar{x}\sqrt{rE}}{6\sigma s_f^{\beta/2}(\tau)},  \qquad \mathcal{M}_4(\tau) =8Q(\tau,\bar{x})-3Q(\tau,2\bar{x})+\frac{8}{9}Q(\tau,3\bar{x})-\frac{1}{8}Q(\tau,4\bar{x});
\end{equation}
\begin{equation}
\xi_2(\tau,x;\alpha) = r-\frac{\sigma^2 [e^xs_f(\tau)]^\beta}{2}+\frac{s'_f(\tau)}{s_f(\tau)}+O(h^4).
\end{equation}
\begin{equation}
\xi_3(\tau,x;\alpha) = \xi_2(\tau,x;\alpha)+ \sigma^2 \alpha[e^x s_f(\tau)]^{\beta}+O(h^4).
\end{equation}
Another valuable alternative would be to consider a staggered fourth-order boundary scheme similar to the staggered sixth-order boundary scheme presented in the work of Nwankwo and Dai \cite{Nwankwoss}. The benefit of this implementation is to ensure that we are using more of the node points close to the left boundary for approximating these time-dependent coefficients. Moreover, it will enable the boundary scheme to be more generalized and non-uniform suitable for the implementation of the non-uniform high-order numerical scheme. To this end, we consider another Lemma below for the staggered case.
\begin{lemma}
Assume that $u(\tau,x)\in C^{1,2}((0,T],[0,x_{max}])$, it holds that
\begin{align}\label{dnr33lpq}
b_0(h)Q(\tau,x_0)&+b_1(h)Q(\tau,h_1)+b_2(h)Q(\tau,h_2)+b_3(h)Q(\tau,h_3)+b_4(h)Q(\tau,h_4)=\nonumber
\\&\nonumber\\ & c_0(h)Q'(\tau,x_0)+d_0(h) Q''(\tau,x_0),  \qquad h_i = \gamma_i h;
\end{align}
with
\begin{equation}
b_0(h) = a_8(h) [a_6(h) (a_3(h)-1)-(a_4(h)-1)]-[a_7(h) (a_4(h)-1)-(a_5(h)-1)],
\end{equation}
\begin{equation}
b_1(h) = a_8(h)a_6(h)a_3(h),  
\end{equation}
\begin{equation}
b_2(h) = a_6(h)a_8(h)+a_4(h)a_8(h)+a_4(h)a_7(h),  
\end{equation}
\begin{equation}
b_3(h) =a_5(h)+a_7(h)+a_8(h),  \quad  b_4(h) =1;
\end{equation}
\begin{equation}
c_0(h) = a_8(h) [a_6(h) (a_3(h) h_1-h_2 )-(a_4(h) h_2-h_3 )]-[a_7(h) (a_4(h) h_2-h_3 )-(a_5(h) h_3-h_4 )],
\end{equation}
\begin{equation}
d_0(h) = \frac{a_8(h)}{2} [a_6(h) (a_3(h) h_1^2-h_2^2 )-(a_4(h) h_2^2-h_3^2 )]-\frac{1}{2} [a_7(h) (a_4(h) h_2^2-h_3^2 )-(a_5(h) h_3^2-h_4^2 )],
\end{equation}
\begin{equation}
a_3(h)=\frac{h_2^5}{h_1^5},  \quad a_4(h) =\frac{h_3^5}{h_2^5}, \quad a_5(h)=\frac{h_4^5}{h_3^5};
\end{equation}
\begin{equation}
a_6(h)=\frac{a_4(h) h_2^4-h_3^4}{a_3(h) h_1^4-h_2^4}, 
\end{equation}
\begin{equation}
a_7(h) =\frac{a_5(h) h_3^4-h_4^4}{a_4(h) h_2^4-h_3^4 },
\end{equation}
\begin{equation}
a_8(h)=\frac{a_7(h) (a_4(h) h_2^3-h_3^3 )-(a_5(h) h_3^3-h_4^3)}{a_6(h) (a_3(h) h_1^3-h_2^3 )-(a_4(h)h_2^3-h_3^3 ) }.
\end{equation}
\end{lemma}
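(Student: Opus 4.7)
The plan is to prove the identity via the classical method of undetermined coefficients. I would regard the five unknown weights $b_0(h),\ldots,b_4(h)$ on the left, together with $c_0(h)$ and $d_0(h)$ on the right, as parameters to be fixed so that the formula is exact on polynomials of degree as high as possible. Since there are effectively five free $b_i$'s (with normalisation $b_4(h)=1$) and both $c_0$ and $d_0$ are then \emph{defined} by the resulting combination, one can make the formula exact for $1,x,x^2,x^3,x^4,x^5$, i.e.\ to fifth degree, which after division by $d_0=O(h^2)$ will deliver the advertised fourth-order accuracy.

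Concretely, I would Taylor expand each $Q(\tau,h_i)$ about $x_0=0$ up to and including the $h_i^5$ term and substitute into the left-hand side of \eqref{dnr33lpq}. Collecting the coefficients of $Q(\tau,x_0)$, $Q'(\tau,x_0)$, $Q''(\tau,x_0)$, $Q'''(\tau,x_0)$, $Q^{(4)}(\tau,x_0)$, $Q^{(5)}(\tau,x_0)$ and matching to $c_0(h)Q'(\tau,x_0)+d_0(h)Q''(\tau,x_0)$ produces the two \emph{defining} relations
\begin{equation*}
c_0(h)=\sum_{i=0}^{4}b_i(h)\,h_i,\qquad d_0(h)=\tfrac{1}{2}\sum_{i=0}^{4}b_i(h)\,h_i^{2},
\end{equation*}
together with four \emph{annihilation} constraints obtained from killing the coefficients of $Q$, $Q'''$, $Q^{(4)}$ and $Q^{(5)}$:
\begin{equation*}
\sum_{i=0}^{4}b_i(h)=0,\qquad \sum_{i=1}^{4}b_i(h)h_i^{k}=0,\quad k=3,4,5
\end{equation*}
(using $h_0=0$). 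Fixing $b_4(h)=1$ reduces this to a $3\times 3$ linear system in $b_1(h),b_2(h),b_3(h)$ with the three $h_i^{5}$, $h_i^{4}$, $h_i^{3}$ rows, and a final equation that recovers $b_0(h)$ from the zeroth-moment condition.

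I would solve that $3\times 3$ system by two successive Gaussian-elimination sweeps. The first sweep eliminates the $b_3$ unknown from the $h^5$ row using the $h^4$ row, and from the $h^4$ row using the $h^3$ row; these two reductions are what produce the ratio $a_5(h)=h_4^{5}/h_3^{5}$ in the $h^5$-to-$h^4$ step and $a_4(h)=h_3^{5}/h_2^{5}$ in the $h^4$-to-$h^3$ step, together with the cross-multiplicative factors $a_7(h)$ and $a_6(h)$ that the problem statement assigns to the resulting coefficients of $b_2$. The second sweep eliminates $b_2$, which introduces $a_3(h)=h_2^{5}/h_1^{5}$ and, combining the two reduced equations, the final ratio $a_8(h)$; solving for $b_1$ and back-substituting then yields $b_2$, $b_3$ in precisely the nested products written in the statement, and $b_0$ follows from the zeroth-moment relation. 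The coefficients $c_0(h)$ and $d_0(h)$ are then read off by substituting the same $b_i$'s into the defining sums, which, after the same pair of reductions, reproduce the bracketed expressions in the statement with $h_i$ (respectively $h_i^2/2$) replacing the $h_i^{3,4,5}$ entries of the eliminated rows.

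The essential obstacle is not conceptual but algebraic: one must carry the Gaussian elimination through two sweeps in symbolic form while keeping the nested multiplicative structure $a_8[a_6(a_3\cdot-\cdot)-(a_4\cdot-\cdot)]-[a_7(a_4\cdot-\cdot)-(a_5\cdot-\cdot)]$ intact across the four quantities $b_0,c_0,d_0$ and the single coefficient $b_2$. This is precisely the step the authors (like the uniform case in the previous lemma) elect to skip; I would verify it by checking that in the limit $\gamma_i=i$ (equispaced grid with $h_i=ih$) the formulae collapse to the known uniform five-point scheme analogous to \eqref{dnr33lpp}, which serves both as a sanity check and as a guide for organising the elimination.
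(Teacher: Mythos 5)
Your overall strategy is sound and is, in substance, the only reasonable way to establish this identity; note, however, that the paper itself offers no derivation at all --- its ``proof'' consists of a single sentence deferring to the staggered sixth-order construction in Nwankwo and Dai \cite{Nwankwoss} --- so you are supplying the argument the authors omit rather than paralleling one they give. Your moment conditions are exactly right: Taylor expansion forces $\sum_i b_i=0$, $\sum_i b_i h_i^k=0$ for $k=3,4,5$, with $c_0=\sum_i b_i h_i$ and $d_0=\tfrac12\sum_i b_i h_i^2$, and the $O(h^6)$ remainder divided by $d_0=O(h^2)$ gives the claimed fourth-order accuracy. Two points in your execution plan need adjustment. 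First, the nested ratios are not produced by eliminating $b_3$ from the $h^5$ row using the $h^4$ row (that elimination would generate factors of $h_3$, not $h_4^5/h_3^5$); they arise from the dual construction, i.e.\ successive pairwise combinations of the differences $D_i=Q(\tau,h_i)-Q(\tau,x_0)$ that annihilate one derivative at a time: $E_j$ with multipliers $a_3,a_4,a_5=h_{j+1}^5/h_j^5$ kills $Q^{(5)}$, then $F_j=a_{6,7}E_j-E_{j+1}$ kills $Q^{(4)}$, and $G=a_8F_1-F_2$ kills $Q'''$, after which the coefficients of $Q'$ and $Q''$ in $G$ are precisely the stated $c_0$ and $d_0$. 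Second, your proposed sanity check against the uniform limit $\gamma_i=i$ is well chosen and in fact reveals that the lemma as printed cannot hold with all plus signs on the left: carrying out the construction gives $-b_0Q(\tau,x_0)+b_1Q(\tau,h_1)-b_2Q(\tau,h_2)+b_3Q(\tau,h_3)-b_4Q(\tau,h_4)=c_0Q'+d_0Q''$, whose uniform specialization is $8$ times the alternating-sign formula \eqref{dnr33lpp} (indeed $b_1=64$, $b_2=24$, $b_3=64/9$, $b_4=1$, $b_0=415/9$, $c_0=\tfrac{100}{3}h$, $d_0=8h^2$ when $h_i=ih$), so the signs on $b_0$, $b_2$ and $b_4$ in the display must be corrected for your moment system $\sum_i b_ih_i^k=0$ to be the one actually being solved. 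With those two repairs your argument goes through.
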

\begin{proof}
The derivation is a simpler version to the one presented in the work of Nwankwo and Dai \cite{Nwankwoss} for a staggered sixth-order boundary scheme. Hence, we refer the reader to the work of Nwankwo and Dai \cite{Nwankwoss} and skip the detail derivation of (\ref{dnr33lpq}).
\end{proof}
\noindent The time dependent coefficients which involve the first-order derivative of the optimal exercise boundary is then obtained from the staggered scheme as follows:
\begin{equation}
\frac{s'_f(\tau)}{s_f(\tau)} = \frac{\mathcal{M}_6(\tau;\alpha)-\mathcal{M}_7(\tau;\alpha)+\mathcal{M}_8(\tau)}{\mathcal{M}_5(\tau;\alpha)}, \quad \mathcal{M}_5(\tau;\alpha) = -\frac{2 d_0(h) \sqrt{rE}}{3 \sigma^3 s_f^{3\beta /2}(\tau)};
\end{equation}
\begin{equation}
 \mathcal{M}_6(\tau;\alpha) = \frac{\beta d_0(h)\sqrt{rE}}{3 \sigma s_f^{\beta /2}(\tau)}+\frac{2 \nu d_0(h) \sqrt{rE}}{3 \sigma^3 s_f^{3\beta /2}(\tau)}, \quad \mathcal{M}_7(\tau;\alpha) = \frac{c_0(h)\sqrt{rE}}{\sigma s_f^{\beta/2}(\tau)};
\end{equation}
\begin{equation}
 \mathcal{M}_8(\tau) =b_1(h)Q(\tau,h_1)+b_2(h)Q(\tau,h_2)+b_3(h)Q(\tau,h_3)+b_4(h)Q(\tau,h_4).
\end{equation}
Subsequently, $\xi_2(\tau,x;\alpha)$ and $\xi_3(\tau,x;\alpha)$ can also be approximated.

\subsection{Third-order compact Robin boundary closures}\label{subs22}
In this research, our implementation considers a scenario where we do not know the left boundary value of the option value and compute the latter first before the early exercise boundary. Hence, the discrete solution of the left boundary can be coupled in the discrete matrix system from which the solution of the option value $\textbf{u}^{n+1}$ is obtained. In such scenario, a Neumann or Robin left boundary scheme needs to be developed at $x_0 =0$. Once $\textbf{u}^{n+1}$ is approximated, the early exercise boundary $s_f(\tau_{n+1})$ is obtained from the discrete solution of the option value based on the value matching condition as 
\begin{equation}
s_f(\tau_{n+1}) = E-u(\tau_{n+1},x_0).
\end{equation}
In turn, the left boundary value of the delta sensitivity can be obtained from the optimal exercise boundary based on smooth pasting condition as follows:
\begin{equation}
w(\tau_{n+1},x_0) = -s_f(\tau_{n+1}).
\end{equation}
Here, we present a new fourth-order Robin boundary scheme for obtaining the discrete equation of the option value at $x_0$ by considering the following Lemma.
\begin{lemma}
Assume that $u(\tau,x)\in C^{1,2}((0,T],(0,x_{max}])$, it holds that
\begin{align}\label{dnr33lp66}
\frac{5}{3}u''(\tau,x_0)+\frac{2}{3}u''(\tau,x_1)-\frac{1}{3}u''(\tau,x_2) &=-\frac{1}{h^2}\left(7u(\tau,x_0)-8u(\tau,x_1)+u(\tau,x_2)\right)\nonumber\\&\nonumber\\&-\frac{6}{h}u'(\tau,x_0)+O(h^3).
\end{align}
\end{lemma}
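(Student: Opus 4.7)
The plan is to prove the identity by direct Taylor expansion about $x_0$, since both sides are finite linear combinations of point values of $u$ and its derivatives at nodes that are multiples of $h$ away from $x_0$. This is the standard strategy for validating a compact finite-difference boundary closure: pick a sufficient order of expansion, match coefficients of $u, u', u'', u''', u^{(4)}$, and identify the first non-cancelling term.

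First I would write out the four expansions needed, all at $\tau$ fixed and centered at $x_0$, up through order $h^5$:
\begin{align*}
u(\tau,x_1) &= u + hu' + \tfrac{h^2}{2}u'' + \tfrac{h^3}{6}u''' + \tfrac{h^4}{24}u^{(4)} + \tfrac{h^5}{120}u^{(5)} + O(h^6),\\
u(\tau,x_2) &= u + 2hu' + 2h^2 u'' + \tfrac{4h^3}{3}u''' + \tfrac{2h^4}{3}u^{(4)} + \tfrac{4h^5}{15}u^{(5)} + O(h^6),\\
u''(\tau,x_1) &= u'' + h u''' + \tfrac{h^2}{2} u^{(4)} + \tfrac{h^3}{6} u^{(5)} + O(h^4),\\
u''(\tau,x_2) &= u'' + 2h u''' + 2h^2 u^{(4)} + \tfrac{4h^3}{3} u^{(5)} + O(h^4),
\end{align*}
where all derivatives on the right are evaluated at $(\tau,x_0)$.

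Next I would substitute these into the left-hand side $\tfrac{5}{3}u''(\tau,x_0)+\tfrac{2}{3}u''(\tau,x_1)-\tfrac{1}{3}u''(\tau,x_2)$ and into the right-hand side $-\tfrac{1}{h^2}(7u(\tau,x_0)-8u(\tau,x_1)+u(\tau,x_2))-\tfrac{6}{h}u'(\tau,x_0)$, and verify term by term that the coefficients of $u$, $u'$, $u''$, $u'''$, $u^{(4)}$ agree. The pairs of coefficient weights $(5/3,2/3,-1/3)$ on the second-derivative side and $(7,-8,1)$ together with $6/h$ on the function-value side are precisely engineered so that the $u$, $u'$, $u'''$ contributions drop out and the $u''$ and $u^{(4)}$ contributions balance. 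In particular, both sides reduce to $2u''(\tau,x_0)-\tfrac{1}{3}h^2 u^{(4)}(\tau,x_0)$ through order $h^2$. The first surviving discrepancy comes from the $u^{(5)}$ coefficients: the left side contributes $-\tfrac{1}{3}h^3 u^{(5)}$ while the right contributes $-\tfrac{1}{5}h^3 u^{(5)}$, producing a residual of order $h^3$. This establishes the claimed truncation error.

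The only real obstacle is bookkeeping: the derivation is elementary but unforgiving, so the care goes into (i) keeping one extra order in the $u$ expansion than in the $u''$ expansion because of the $1/h^2$ prefactor, and (ii) confirming that the $h^{-1}$ and $h^{0}$ terms truly vanish so that the identity has the stated Robin form. Once these cancellations are verified, invoking the $C^{1,2}$ regularity hypothesis to justify Taylor's theorem with remainder completes the argument, and I would close by remarking that the explicit leading error constant is $\tfrac{2}{15}h^3 u^{(5)}(\tau,x_0)$, confirming the $O(h^3)$ bound.
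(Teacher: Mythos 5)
Your proof is correct, and your coefficient bookkeeping checks out: the left-hand side expands to $2u''-\tfrac{h^2}{3}u^{(4)}-\tfrac{h^3}{3}u^{(5)}+O(h^4)$ and the right-hand side to $2u''-\tfrac{h^2}{3}u^{(4)}-\tfrac{h^3}{5}u^{(5)}+O(h^4)$, so the residual is indeed $\tfrac{2}{15}h^3u^{(5)}(\tau,x_0)+O(h^4)$, confirming the $O(h^3)$ truncation. However, your route is genuinely different from the paper's. The paper does not expand anything directly: it starts from the extrapolated Robin boundary closure of Yun et al.\ (their equation (\ref{ehoib2f})), which still contains a $2hu'''(\tau,x_0)$ term, and then eliminates that third derivative by substituting a known compact Neumann boundary scheme (\ref{ehoib2g}); the target identity falls out as the linear combination of these two cited results. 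The paper's approach buys brevity and situates the new closure within an existing family of compact boundary schemes, but it inherits the correctness (and any typographical slips) of the quoted formulas and does not exhibit the leading error term. Your direct Taylor verification is more elementary, fully self-contained, and yields the explicit error constant $\tfrac{2}{15}u^{(5)}$, which is useful for judging the scheme's quality; its only cost is the unforgiving bookkeeping you acknowledge. One caveat applies to both arguments: the stated hypothesis $u\in C^{1,2}$ is not enough to license expansions through $u^{(5)}$ (or even $u^{(4)}$), so strictly one needs $u$ to be at least $C^5$ in $x$ near $x_0$ (or $C^4$ with a Lipschitz fourth derivative) for the $O(h^3)$ claim; you should state the regularity you actually invoke when applying Taylor's theorem with remainder.
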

\begin{proof}
To derive (\ref{dnr33lp66}), we consider an extrapolated Robin boundary condition presented in the work of Yun et al. \cite{Yun} as given below
\begin{align}\label{ehoib2f}
\frac{35}{3}u''(\tau,x_0)+\frac{8}{3}u''(\tau,x_1)-\frac{1}{3}u''(\tau,x_2) &= -\frac{1}{12h^2}\left(31u(\tau,x_0)-32u(\tau,x_1)+u(\tau,x_2)\right)\nonumber\\&\nonumber\\&-\frac{30}{h}u(\tau,x_0)-2hu'''(\tau,x_0) +O(h^4).
\end{align}
Substituting $u'''(\tau,x_0)$ in (\ref{ehoib2f}) with
\begin{align}\label{ehoib2g}
2hu'''(\tau,x_0) = 10u''(\tau,x_0)+2u''(\tau,x_1)-\frac{24}{h^2}\left(u(\tau,x_1)-u(\tau,x_0)\right)+\frac{24}{h}u'(\tau,x_0)+O(h^3).
\end{align}
We can obtain (\ref{dnr33lp66})
\end{proof}
\begin{remark}
Note that the scheme in (\ref{ehoib2g}) is a well-known Neumann compact boundary scheme. One may see the work of Cao et al. \cite{Cao} and Yun et al. \cite{Yun} for further detail regarding the scheme in (\ref{ehoib2g}).
\end{remark}
\noindent Considering the value matching and smooth pasting conditions
\begin{equation}
u(\tau,x_0) = E-s_f(\tau),  \qquad w(\tau,x_0) = -s_f(\tau);
\end{equation}
we can generate another Robin boundary condition given below
\begin{equation}
w(\tau,x_0)-u(\tau,x_0) = -E.
\end{equation}
Substituting to (\ref{dnr33lp66}), we then obtain
\begin{align}\label{drl6d}
\frac{5}{3}u''(\tau,x_0)+\frac{2}{3}u''(\tau,x_1)&-\frac{1}{3}u''(\tau,x_2) = -\frac{7+6h}{h^2}u(\tau,x_0)\nonumber\\&\nonumber\\&+\frac{1}{h^2}\left(8u(\tau,x_1)-u(\tau,x_2)\right) +\frac{6}{h}E+O(h^3).
\end{align}
Hence, we can describe our compact boundary scheme in (\ref{drl6d}) as a fourth-order Robin boundary scheme.
\subsection{Fourth-order uniform and non-equidistant compact interior scheme and system of semi-discrete equations}
For the interior discretization of the option value, we consider the well-known fourth-order compact finite difference as follows:
\begin{align}
u''(\tau,x_{i-1})+10u''(\tau,x_i)+u''(\tau,x_{i+1}) =\frac{12}{h^2}\left[u(\tau,x_{i-1})-2u(\tau,x_i)+u(\tau,x_{i+1})\right] +O(h^4).
\end{align}
Coupled with the fourth-order Robin boundary scheme presented in subsection \ref{subs22}, we then obtain a semi-discrete differential equation in the form of method of line (MOL) as follows:
\begin{equation}
A_u\textbf{u}_{xx} = B_u\textbf{u}+\textbf{b}_u, \quad A_{w}\textbf{w}_{xx} = B_{w}\textbf{w}+\textbf{b}_w;
\end{equation}
with
$$ A_u=
\begin{bmatrix}
\frac{5}{3} & \frac{2}{3} & -\frac{1}{3} & 0 & 0 & 0 & \cdots & 0 & 0\\
1 & 10 & 1 & 0 & 0 & 0 & \cdots & 0 & 0\\
0 & 1 & 10 & 1 & 0 & 0 & \cdots & 0& 0\\
0 & 0 & 1  & 10 & 1 & 0 & \cdots & 0 & 0\\
\vdots & \vdots & \vdots  &\ddots & \ddots & \ddots & \vdots& \vdots & \vdots\\
0 & 0 & \cdots  & 0 & 1 & 10 & 1 & 0 & 0\\
0 & 0 & \cdots  & 0 & 0 & 1 & 10 & 1& 0\\
0 & 0 & \cdots & 0 & 0 & 0 & 1 & 10 & 1\\
0 & 0 & \cdots & 0 & 0 & 0 & 0 & 1 & 10
\end{bmatrix}_{M \times M}
,  \quad \textbf{b}_u= 
\begin{bmatrix}
\frac{6}{h}K\\
0\\
0 \\
0\\
0\\
0\\
0\\
0\\
0
\end{bmatrix}_{M \times 1};$$

$$ B_u= \frac{1}{h^2}
\begin{bmatrix}
-7+6h& 8 & -1 & 0 & 0 & 0 & \cdots & 0 & 0\\
12 & -24 & 12 & 0 & 0 & 0 & \cdots & 0 & 0\\
0 & 12 & -24 & 12 & 0 & 0 & \cdots & 0& 0\\
0 & 0 & 12  & -24 & 12 & 0 & \cdots & 0 & 0\\
\vdots & \vdots & \vdots  &\ddots & \ddots & \ddots & \vdots& \vdots & \vdots\\
0 & 0 & \cdots  & 0 & 12 & -24 & 12 & 0 & 0\\
0 & 0 & \cdots  & 0 & 0 & 12 & -24 & 12 & 0\\
0 & 0 & \cdots & 0 & 0 & 0 & 12 & -24 & 12\\
0 & 0 & \cdots & 0 & 0 & 0 & 0 & 12 & -24
\end{bmatrix}_{M \times M}
,$$
\begin{equation}\label{discret1}
\textbf{u}_{\tau}^n =\bm{\xi}_1^n A_u^{-1}\left(B_u\textbf{u}^n+\textbf{b}_u^n\right)+\bm{\xi}_2^n\textbf{w}^n-r\textbf{u}^n+O(h^4).
\end{equation}
After integration of semi-discrete equation in (\ref{discret1}), the optimal exercise boundary is then obtained from the left boundary of the option value based on value matching condition as given below
\begin{equation}
u(\tau_{n+1},x_0) = E-s_f(\tau_{n+1}).
\end{equation}
\noindent The left boundary of the delta sensitivity is then obtained from the optimal exercise boundary based on smooth pasting condition as follows:
\begin{equation}
w(\tau_{n+1},x_0) = -s_f(\tau_{n+1}).
\end{equation}
For the near boundary scheme for the delta sensitivity, we consider the fourth-order combined compact scheme presented in the work of Nwankwo and Dai \cite{Nwankwos} and Abrahamsen and Fornberg \cite{Abrahamsen} as follows:
\begin{equation}\label{dnp1}
w''(\tau,x_1) = \frac{15}{2h^3}\left[u(\tau,x_2)-u(\tau,x_0)] \right)-\frac{3}{2h^2}\left[w(\tau,x_{0})+8w(\tau,x_{1})+w(\tau,x_{2})\right]+O(h^4).
\end{equation}
\noindent For the interior scheme, we use the well-known fourth-order compact finite difference scheme as
\begin{align}\label{drl6d2}
w''(\tau,x_{i-1})+10w''(\tau,x_i)+w''(\tau,x_{i+1}) =\frac{12}{h^2}\left(w(\tau,x_{i-1})-2w(\tau,x_i)+w(\tau,x_{i+1}\right) +O(h^4).
\end{align}
Coupling the left boundary values, near boundary discrete equation and the interior discretization, we obtain a semi-discrete differential equation for the delta sensitivity as follows:
$$ A_w= 
\begin{bmatrix}
10 & 0 & 0 & 0 & 0 & 0 & \cdots & 0 & 0\\
1 & 10 & 1 & 0 & 0 & 0 & \cdots & 0 & 0\\
0 & 1 & 10 & 1 & 0 & 0 & \cdots & 0& 0\\
0 & 0 & 1  & 10 & 1 & 0 & \cdots & 0 & 0\\
\vdots & \vdots & \vdots  &\ddots & \ddots & \ddots & \vdots& \vdots & \vdots\\
0 & 0 & \cdots  & 0 & 1 & 10 & 1 & 0 & 0\\
0 & 0 & \cdots  & 0 & 0 & 1 & 10 & 1 & 0\\
0 & 0 & \cdots & 0 & 0 & 0 & 1 & 10 & 1\\
0 & 0 & \cdots & 0 & 0 & 0 & 0 & 1 & 10
\end{bmatrix}_{M-1 \times M-1},
$$

$$
B_w= \frac{1}{h^2}
\begin{bmatrix}
-120 & -15 & 0 & 0 & 0 & 0 & \cdots & 0 & 0\\
12 & -24 & 12 & 0 & 0 & 0 & \cdots & 0 & 0\\
0 & 12 & -24 & 12 & 0 & 0 & \cdots & 0& 0\\
0 & 0 & 12  & -24 & 12 & 0 & \cdots & 0 & 0\\
\vdots & \vdots & \vdots  &\ddots & \ddots & \ddots & \vdots& \vdots & \vdots\\
0 & 0 & \cdots  & 0 & 12 & -24 & 12 & 0 & 0\\
0 & 0 & \cdots  & 0 & 0 & 12 & -24 & 12 & 0\\
0 & 0 & \cdots & 0 & 0 & 0 & 12 & -24 & 12\\
0 & 0 & \cdots & 0 & 0 & 0 & 0 & 12 & -24
\end{bmatrix}_{M-1 \times M-1}
,$$
$$
\textbf{b}_w= 
\begin{bmatrix}
\frac{75}{h^3}\left(u_2^n -u_0^n \right)-\frac{15}{h^2} w_0^n \\
0\\
0 \\
0\\
0\\
0\\
0\\
0\\
0
\end{bmatrix}_{M-1 \times 1}
,$$
\begin{equation}\label{discret2}
\textbf{w}_{\tau}^n =\bm{\xi}_1^n A_{w,y}^{-1}\left(B_{w,y}\textbf{w}^n+\textbf{b}_w^n\right)+\bm{\xi}_3^n A_u^{-1}\left(B_u\textbf{u}^n+\textbf{b}_u^n\right)-\bm{\xi}_4^n\textbf{w}^n+O(h^4).
\end{equation}
It is important to observe that $\bm{\xi}_1^n$, $\bm{\xi}_2^n$, $\bm{\xi}_3^n$, and $\bm{\xi}_4^n$ are vectorized due to dependency on $x$. For the compact differencing on non-equidistant grid, we consider the high-order non-uniform Hermitian scheme presented in the work of of Shukla and Zhong \cite{Shukla1} and Shukla et al. \cite{Shukla2} for discretizing interior nodes as follows:
\begin{equation}\label{shkla}
d_{i,1} f''(x_{i-1})+f''(x_i )+d_{i,3} f''(x_{i+1}) \approx e_{i,1} f(x_{i-1})+e_{i,2} f(x_i)+e_{i,3} f(x_{i+1} ),  \quad i=2, 3, \cdots M-1.
\end{equation}
Here, 
\begin{equation}
d_{i,1} = \left(\frac{h_{i+1}}{h_i+h_{i+1} }\right) \left(\frac{h_i^2+h_i^2 h_{i+1}^2-h_{i+1}^2}{h_i^2+3h_{i+1} h_i+h_{i+1}^2}\right),
\end{equation}
\begin{equation}
d_{i,3} = \left(\frac{h_{i}}{h_i+h_{i+1} }\right) \left(\frac{h_{i+1}^2+h_i^2 h_{i+1}^2-h_{i}^2}{h_i^2+3h_{i+1} h_i+h_{i+1}^2}\right),
\end{equation}
\begin{equation}
e_{i,1} = \left(\frac{h_{i+1}}{h_i+h_{i+1} }\right) \left(\frac{12}{h_i^2+3h_{i+1} h_i+h_{i+1}^2}\right),
\end{equation}
\begin{equation}
e_{i,2} = \frac{-12}{h_i^2+3h_{i+1} h_i+h_{i+1}^2}, \quad e_{i,3} = \left(\frac{h_{i}}{h_i+h_{i+1} }\right) \left(\frac{12}{h_i^2+3h_{i+1} h_i+h_{i+1}^2}\right).
\end{equation}
The major reason for considering the high-order non-uniform Hermitian scheme is to implement local refinement strategy such that we can concentrate more of the grid points in the locality of the left corner point. Coupled with the uniform or non-uniform fourth-order analytical approximations described in subsection \ref{sdfn} for computing the first-order derivative of the optimal exercise boundary, we aim to substantially improve the irregularity, discontinuity and singularity that reside at the left corner point and left boundary locations. For the near-boundary scheme, we consider a fourth-order scheme as follows:
\begin{align}\label{fder2}
14f''(x_1)-5f''(x_2)&+4f''(x_3)-f''(x_4) = \frac{12}{h_a^2}\left(f(x_0)-2f(x_1)+f(x_2)\right), \quad h_a \leq \min{h_i}.
\end{align}
For this work, we only consider the case for which $h_a = 0.25h$. For description, one may see Fig \ref{fig:Local111}. Also, for implementation in option pricing problem with a high-order compact scheme, we refer the readers to the work of Chen et al. \cite{Chen}, Lee and Sun \cite{Lee1}, and Thakoor et al. \cite{Thakoor}. Furthermore, our local refinement strategy is strategically implemented to ensure that the near boundary scheme is locally uniform even though the space-time grid is globally non-equidistant. Hence, we have
\begin{equation}
A_1\textbf{u}_{xx} = B_1\textbf{u}+\textbf{b}_{u(1)}, \quad A_2\textbf{u}_{xx} = B_2\textbf{u}+\textbf{b}_{u(2)},  \quad A_2\textbf{w}_{xx} = B_2\textbf{w}+\textbf{b}_w;
\end{equation}
with $A_1$, $B_1$, $b_{u(1)}$, $A_2$, $B_2$, $b_{u(2)}$, and $b_w$ given as:
 
$$ A_1=
\begin{bmatrix}
\frac{5}{3} & \frac{2}{3}& -\frac{1}{3} & 0 & 0 & 0 & \cdots & 0 & 0\\
d_{i,1} & 1 & d_{i,3} & 0 & 0 & 0 & \cdots & 0 & 0\\
0 & d_{3,2} & 1 & d_{3,4} & 0 & 0 & \cdots & 0& 0\\
0 & 0 & d_{4,3}  & 1 & d_{4,5  } & 0 & \cdots & 0 & 0\\
\vdots & \vdots & \vdots  &\ddots & \ddots & \ddots & \vdots& \vdots & \vdots\\
0 & 0 & \cdots  & 0 & d_{M-5,M-4} & 1 & d_{M-3,M-4} & 0 & 0\\
0 & 0 & \cdots  & 0 & 0 & d_{M-4,M-3} & 1 & d_{M-2,M-3}& 0\\
0 & 0 & \cdots & 0 & 0 & 0 & d_{M-3,M-2} & 1 & d_{M-1,M-2}\\
0 & 0 & \cdots & 0 & 0 & 0 & 0 & d_{M-1,M} & 1
\end{bmatrix}
,$$

$$
A_2= 
\begin{bmatrix}
14 & 5 & -4 & 1 & 0 & 0 & \cdots & 0 & 0\\
d_{3,2} & 1 &  d_{3,4} & & 0 & 0 & \cdots & 0& 0\\
0 & d_{4,3} & 1 &  d_{4,5  } & 0 & \cdots & 0 & 0\\
\vdots & \vdots & \vdots  &\ddots & \ddots & \ddots & \vdots& \vdots & \vdots\\
0 & 0 & \cdots  & 0 & d_{M-5,M-4} & 1 & d_{M-3,M-4} & 0 & 0\\
0 & 0 & \cdots  & 0 & 0 & d_{M-4,M-3} & 1 & d_{M-2,M-3}& 0\\
0 & 0 & \cdots & 0 & 0 & 0 & d_{M-3,M-2} & 1 & d_{M-1,M-2}\\
0 & 0 & \cdots & 0 & 0 & 0 & 0 & d_{M-2,M-1} & 1
\end{bmatrix}
,$$

$$ B_u=
\begin{bmatrix}
-\frac{24}{h_a^2} & \frac{12}{h_a^2} & 0 & 0 & 0 & 0 & \cdots & 0 & 0\\
e_{2,1} & e_{2,2} & e_{2,3} & 0 & 0 & 0 & \cdots & 0 & 0\\
0 & e_{3,2} & e_{3,3} & e_{3,4} & 0 & 0 & \cdots & 0& 0\\
0 & 0 & e_{4,3}  & e_{4,4} & e_{4,5  } & 0 & \cdots & 0 & 0\\
\vdots & \vdots & \vdots  &\ddots & \ddots & \ddots & \vdots& \vdots & \vdots\\
0 & 0 & \cdots  & 0 & e_{M-5,M-4} & e_{M-4,M-4}& e_{M-3,M-4} & 0 & 0\\
0 & 0 & \cdots  & 0 & 0 & e_{M-4,M-3} & e_{M-3,M-3} & e_{M-2,M-3}& 0\\
0 & 0 & \cdots & 0 & 0 & 0 & e_{M-3,M-2} & e_{M-2,M-2} & e_{M-1,M-2}\\
0 & 0 & \cdots & 0 & 0 & 0 & 0 & e_{M-2,M-1} & e_{M-1,M-1}
\end{bmatrix}
,$$

$$
B_{w,y}= 
\begin{bmatrix}
-\frac{120}{h_a^2} & -\frac{15}{h_a^2} & 0 & 0 & 0 & 0 & \cdots & 0 & 0\\
e_{2,1} & e_{2,2} & e_{2,3} & 0 & 0 & 0 & \cdots & 0 & 0\\
0 & e_{3,2} & e_{3,3} & e_{3,4} & 0 & 0 & \cdots & 0& 0\\
0 & 0 & e_{4,3}  & e_{4,4} & e_{4,5  } & 0 & \cdots & 0 & 0\\
\vdots & \vdots & \vdots  &\ddots & \ddots & \ddots & \vdots& \vdots & \vdots\\
0 & 0 & \cdots  & 0 & e_{M-5,M-4} & e_{M-4,M-4}& e_{M-3,M-4} & 0 & 0\\
0 & 0 & \cdots  & 0 & 0 & e_{M-4,M-3} & e_{M-3,M-3} & e_{M-2,M-3}& 0\\
0 & 0 & \cdots & 0 & 0 & 0 & e_{M-3,M-2} & e_{M-2,M-2} & e_{M-1,M-2}\\
0 & 0 & \cdots & 0 & 0 & 0 & 0 & e_{M-2,M-1} & e_{M-1,M-1}
\end{bmatrix}
,$$
$$
\textbf{b}_u= 
\begin{bmatrix}
\frac{12}{h_a^2}u_0^n -\frac{3}{h_a} \left(w_2^n -w_0^n \right)\\
0\\
0 \\
0\\
0\\
0\\
0\\
0\\
0
\end{bmatrix}
,  \quad \textbf{b}_w= 
\begin{bmatrix}
\frac{75}{h_a^3}\left(u_2^n -u_0^n \right)-\frac{15}{h_a^2} w_0^n \\
0\\
0 \\
0\\
0\\
0\\
0\\
0\\
0
\end{bmatrix}
,  \quad \textbf{b}_y= 
\begin{bmatrix}
\frac{75}{h_a^3}\left(w_2^n -w_0^n \right)-\frac{15}{h_a^2} y_0^n \\
0\\
0 \\
0\\
0\\
0\\
0\\
0\\
0
\end{bmatrix}
.$$
\noindent With the discrete matrix system above, we then obtain the semi-discrete system for the option value and hedge sensitivity similar to (\ref{discret1}) and (\ref{discret2}).
\begin{figure}[H]
    \centering
    {\includegraphics[width=0.85\textwidth]{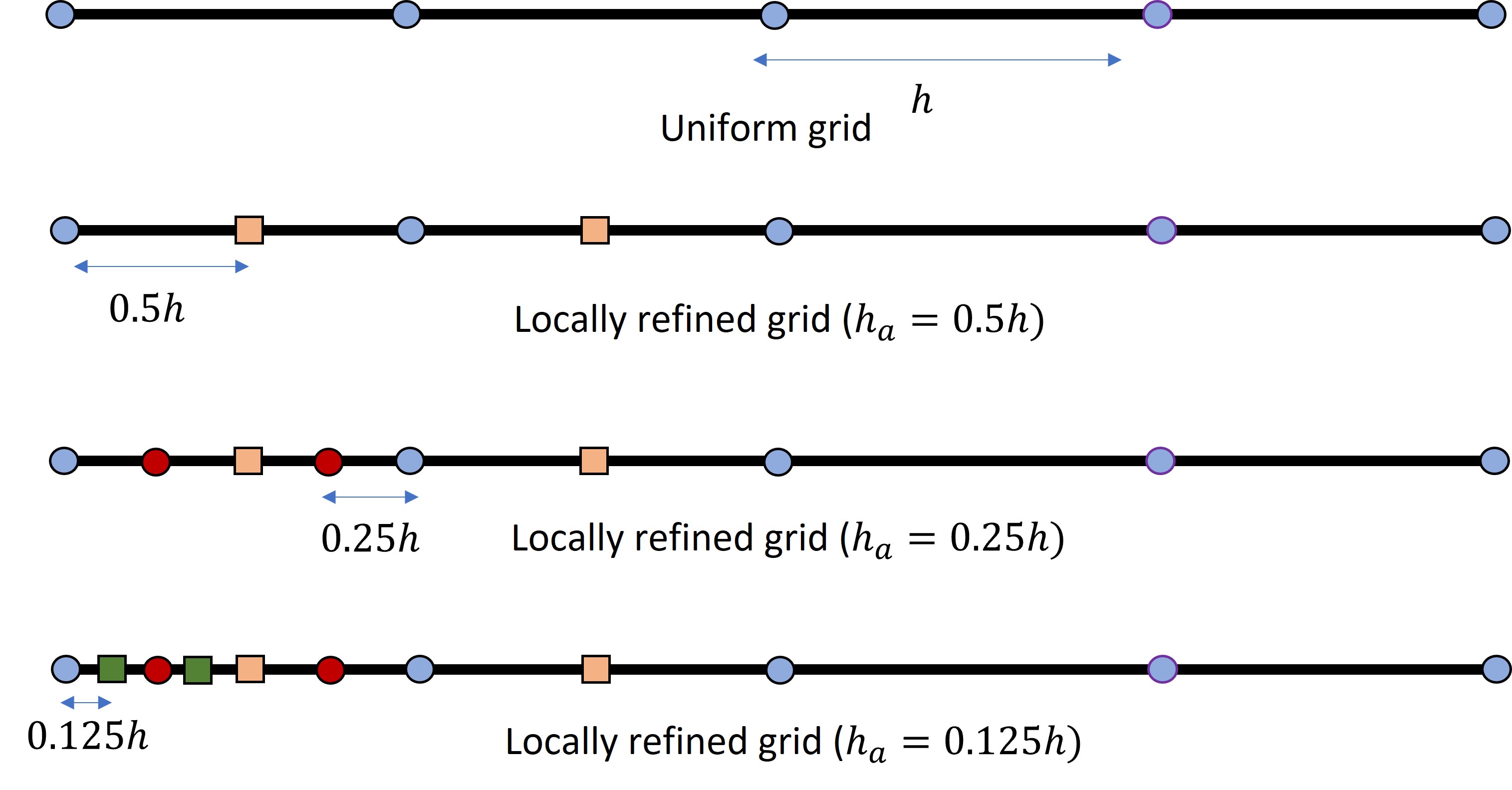}} 
    \caption{Uniform and locally refined space grids \cite{Nwankwos}.}
    \label{fig:Local111}
\end{figure}
\begin{remark}
It is very important to acknowledge the recent work of \cite{Dimitrov}, where the authors implemented a high-order non-uniform Hermitian scheme for solving the fractional Black–Scholes model. We observed that the non-uniform high-order scheme implemented by the authors in their work can also be used in this research work.
\end{remark}
\subsection{5(4) Dormand-Prince Embedded pairs}
Several Runge-Kutta low and high-order adaptive time integration based on embedded pairs exist including the few we reference here \cite{Cash, Fehlberg, Macdougall, Papageorgiou, Simos, Tsitouras, Verner0, Verner} and have been implemented for solving wide range of models \cite{Hoover, Paul, Press, Romeo, Simos0, Tremblay, Wilkie}.  In our previous work, we have implemented both low and high-order Runge-Kutta embedded pairs for solving system of coupled fixed-free boundary diffusion-convection-advection American options pricing PDEs \cite{Nwankwoo, Nwankwos}. Due to model dependency of these pairs, we further survey few 5(4) embedded pairs in our previous work and validate the importance of 5(4) Dormand-Prince and Bogacki-Shampine embedded pairs in terms of speed and accuracy. Hence, the reason for our preferred choice of 5(4) Dormand-Prince embedded pairs is to integrate two systems of fixed-free boundary PDEs in (\ref{f50})-(\ref{f51}). The adaptive nature of these embedded pairs allows us to select the optimal step at each time level instead of using a fixed time step. The description for the 5(4) Dormand-Prince embedded time integration method for solving the system of semi-discrete equations in (\ref{discret1}) and (\ref{discret1}) is described below.\\

\noindent \textbf{First stage}:
\begin{equation}
\frac{s'_f(\tau_n)}{s_f(\tau_n)} = \frac{\mathcal{M}_2(\tau_n,\bar{x};\alpha)-\mathcal{M}_3(\tau_n,\bar{x};\alpha)+\mathcal{M}_4(\tau_n,\bar{x};\alpha)}{\mathcal{M}_1(\tau_n,\bar{x};\alpha)},
\end{equation}
\begin{equation}
\bm{\xi}_1^n = \frac{\sigma^2 [e^{\textbf{x}}s_f(\tau_n)]^\beta}{2} , \qquad \bm{\xi}_2^n = r\bm{1}-\frac{\sigma^2 [e^{\textbf{x}}s_f(\tau_n)]^\beta}{2}+\frac{s'_f(\tau_n)}{s_f(\tau_n)}\bm{1};
\end{equation}
\begin{equation}
\bm{\xi}_3^n = r\bm{1}-\frac{\sigma^2 [e^{\textbf{x}}s_f(\tau_n)]^\beta}{2}+\frac{s'_f(\tau_n)}{s_f(\tau_n)}\bm{1}+\frac{\beta \sigma^2 [e^{\textbf{x}}s_f(\tau_n)]^\beta}{2},  \qquad \bm{\xi}_4^n = r\bm{1}+\frac{\beta \sigma^2 [e^{\textbf{x}}s_f(\tau_n)]^\beta}{2};
\end{equation}
\begin{equation}
\textbf{u}_{\tau}^n =\bm{\xi}_1^n A_u^{-1}\left(B_u\textbf{u}^n+\textbf{b}_u^n\right)+\bm{\xi}_2^n\textbf{w}^n-r\textbf{u}^n,
\end{equation}
\begin{equation}
\textbf{u}^{n+1/7}= \textbf{u}^n+\frac{k}{5}\textbf{u}_{\tau}^n,
\end{equation}
\begin{equation}
\textbf{w}_{\tau}^n =\bm{\xi}_1^n A_{w,y}^{-1}\left(B_{w,y}\textbf{w}^n+\textbf{b}_w^n\right)+\bm{\xi}_3^n A_u^{-1}\left(B_u\textbf{u}^n+\textbf{b}_u^n\right)-\bm{\xi}_4^n\textbf{w}^n,
\end{equation}
\begin{equation}
\textbf{w}^{n+1/7}= \textbf{w}^n+\frac{k}{5}\textbf{w}_{\tau}^n,
\end{equation}
\begin{equation}
s_{f}(\tau_{n+1/7}) = E-u(\tau_{n+1/7},x_0),  \quad w(\tau_{n+1/7},x_0) = -s_{f}(\tau_{n}).
\end{equation}
\begin{remark}
Before we proceed to the remaining Runge-Kutta stages, it is important to mention that in this research work, we denote $\bm{1}$ as a vector whose entries are 1.
\end{remark}
\noindent \textbf{Second stage}:
\begin{equation}
\frac{s'_f(\tau_{n+1/7})}{s_f(\tau_{n+1/7})} = \frac{\mathcal{M}_2(\tau_{n+1/7};\alpha)-\mathcal{M}_3(\tau_{n+1/7};\alpha)+\mathcal{M}_4(\tau_{n+1/7};\alpha)}{\mathcal{M}_1(\tau_{n+1/7})},
\end{equation}
\begin{equation}
\bm{\xi}_1^{n+1/7} = \frac{\sigma^2 [e^{\textbf{x}}s_f(\tau_{n+1/7})]^\beta}{2} , \qquad \bm{\xi}_2^{n+1/7} = r\bm{1}-\frac{\sigma^2 [e^{\textbf{x}}s_f(\tau_{n+1/7})]^\beta}{2}+\frac{s'_f(\tau_{n+1/7})}{s_f(\tau_{n+1/7})}\bm{1};
\end{equation}
\begin{equation}
\bm{\xi}_3^{n+1/7} = r\bm{1}-\frac{\sigma^2 [e^{\textbf{x}}s_f(\tau_{n+1/7})]^\beta}{2}+\frac{s'_f(\tau_{n+1/7})}{s_f(\tau_{n+1/7})}\bm{1}+\frac{\beta \sigma^2 [e^{\textbf{x}}s_f(\tau_{n+1/7})]^\beta}{2}, 
\end{equation}
\begin{equation}
\bm{\xi}_4^{n+1/7} = r\bm{1}+\frac{\beta \sigma^2 [e^{\textbf{x}}s_f(\tau_{n+1/7})]^\beta}{2},
\end{equation}
\begin{equation}
\textbf{u}_{\tau}^{n+1/7} =\bm{\xi}_1^{n+1/7} A_u^{-1}\left(B_u\textbf{u}^{n+1/7}+\textbf{b}_u^{n+1/7}\right)+\bm{\xi}_2^{n+1/7}\textbf{w}^{n+1/7}-r\textbf{u}^{n+1/7},
\end{equation}
\begin{equation}
\textbf{u}^{n+2/7}= \textbf{u}^n+\frac{3k}{40}\textbf{u}_{\tau}^n+\frac{9k}{40}\textbf{u}_{\tau}^{n+1/7},
\end{equation}
\begin{align}
\textbf{w}_{\tau}^{n+1/7} =\bm{\xi}_1^{n+1/7} A_{w,y}^{-1}\left(B_{w,y}\textbf{w}^{n+1/7}+\textbf{b}_w^{n+1/7}\right)&+\bm{\xi}_3^{n+1/7} A_u^{-1}\left(B_u\textbf{u}^{n+1/7}+\textbf{b}_u^{n+1/7}\right)\nonumber
\\&\nonumber\\
&-\bm{\xi}_4^{n+1/7}\textbf{w}^{n+1/7},
\end{align}
\begin{equation}
\textbf{w}^{n+2/7}= \textbf{w}^n+\frac{3k}{40}\textbf{w}_{\tau}^n+\frac{9k}{40}\textbf{w}_{\tau}^{n+1/7},
\end{equation}
\begin{equation}
s_{f}(\tau_{n+2/7}) = E-u(\tau_{n+2/7},x_0),  \quad w(\tau_{n+2/7},x_0) = -s_{f}(\tau_{n+2/7}).
\end{equation}
\paragraph{\textbf{Third stage}:}
\begin{equation}
\frac{s'_f(\tau_{n+2/7})}{s_f(\tau_{n+2/7})} = \frac{\mathcal{M}_2(\tau_{n+2/7};\alpha)-\mathcal{M}_3(\tau_{n+2/7};\alpha)+\mathcal{M}_4(\tau_{n+2/7};\alpha)}{\mathcal{M}_1(\tau_{n+2/7})},
\end{equation}
\begin{equation}
\bm{\xi}_1^{n+2/7} = \frac{\sigma^2 [e^{\textbf{x}}s_f(\tau_{n+2/7})]^\beta}{2} , \qquad \bm{\xi}_2^{n+2/7} = r\bm{1}-\frac{\sigma^2 [e^{\textbf{x}}s_f(\tau_{n+2/7})]^\beta}{2}+\frac{s'_f(\tau_{n+2/7})}{s_f(\tau_{n+2/7})}\bm{1};
\end{equation}
\begin{equation}
\bm{\xi}_3^{n+2/7} = r\bm{1}-\frac{\sigma^2 [e^{\textbf{x}}s_f(\tau_{n+2/7})]^\beta}{2}+\frac{s'_f(\tau_{n+2/7})}{s_f(\tau_{n+2/7})}\bm{1}+\frac{\beta \sigma^2 [e^{\textbf{x}}s_f(\tau_{n+2/7})]^\beta}{2}, 
\end{equation}
\begin{equation}
\bm{\xi}_4^{n+2/7} = r\bm{1}+\frac{\beta \sigma^2 [e^{\textbf{x}}s_f(\tau_{n+2/7})]^\beta}{2},
\end{equation}
\begin{equation}
\textbf{u}_{\tau}^{n+2/7} =\bm{\xi}_1^{n+2/7} A_u^{-1}\left(B_u\textbf{u}^{n+2/7}+\textbf{b}_u^{n+2/7}\right)+\bm{\xi}_2^{n+2/7}\textbf{w}^{n+2/7}-r\textbf{u}^{n+2/7},
\end{equation}
\begin{equation}
\textbf{u}^{n+3/7}= \textbf{u}^n+\frac{44k}{45}\textbf{u}_{\tau}^n-\frac{55k}{16}\textbf{u}_{\tau}^{n+1/7}+\frac{32k}{9}\textbf{u}_{\tau}^{n+2/7},
\end{equation}
\begin{align}
\textbf{w}_{\tau}^{n+2/7} =\bm{\xi}_1^{n+2/7} A_{w,y}^{-1}\left(B_{w,y}\textbf{w}^{n+2/7}+\textbf{b}_w^{n+2/7}\right)&+\bm{\xi}_3^{n+2/7} A_u^{-1}\left(B_u\textbf{u}^{n+2/7}+\textbf{b}_u^{n+2/7}\right)\nonumber
\\&\nonumber\\
&-\bm{\xi}_4^{n+2/7}\textbf{w}^{n+2/7},
\end{align}
\begin{equation}
\textbf{w}^{n+3/7}= \textbf{w}^n+\frac{44k}{45}\textbf{w}_{\tau}^n-\frac{55k}{16}\textbf{w}_{\tau}^{n+1/7}+\frac{32k}{9}\textbf{w}_{\tau}^{n+2/7},
\end{equation}
\begin{equation}
s_{f}(\tau_{n+3/7}) = E-u(\tau_{n+3/7},x_0),  \quad w(\tau_{n+3/7},x_0) = -s_{f}(\tau_{n+3/7}).
\end{equation}
\textbf{Fourth stage}:
\begin{equation}
\frac{s'_f(\tau_{n+3/7})}{s_f(\tau_{n+3/7})} = \frac{\mathcal{M}_2(\tau_{n+3/7};\alpha)-\mathcal{M}_3(\tau_{n+3/7};\alpha)+\mathcal{M}_4(\tau_{n+3/7};\alpha)}{\mathcal{M}_1(\tau_{n+3/7})},
\end{equation}
\begin{equation}
\bm{\xi}_1^{n+3/7} = \frac{\sigma^2 [e^{\textbf{x}}s_f(\tau_{n+3/7})]^\beta}{2} , \qquad \bm{\xi}_2^{n+3/7} = r\bm{1}-\frac{\sigma^2 [e^{\textbf{x}}s_f(\tau_{n+3/7})]^\beta}{2}+\frac{s'_f(\tau_{n+3/7})}{s_f(\tau_{n+3/7})}\bm{1};
\end{equation}
\begin{equation}
\bm{\xi}_3^{n+3/7} = r\bm{1}-\frac{\sigma^2 [e^{\textbf{x}}s_f(\tau_{n+3/7})]^\beta}{2}+\frac{s'_f(\tau_{n+3/7})}{s_f(\tau_{n+3/7})}\bm{1}+\frac{\beta \sigma^2 [e^{\textbf{x}}s_f(\tau_{n+3/7})]^\beta}{2}, 
\end{equation}
\begin{equation}
\bm{\xi}_4^{n+3/7} = r\bm{1}+\frac{\beta \sigma^2 [e^{\textbf{x}}s_f(\tau_{n+3/7})]^\beta}{2};
\end{equation}
\begin{equation}
\textbf{u}_{\tau}^{n+3/7} =\bm{\xi}_1^{n+3/7} A_u^{-1}\left(B_u\textbf{u}^{n+3/7}+\textbf{b}_u^{n+3/7}\right)+\bm{\xi}_2^{n+3/7}\textbf{w}^{n+3/7}-r\textbf{u}^{n+3/7},
\end{equation}
\begin{equation}
\textbf{u}^{n+4/7}= \textbf{u}^n+
\frac{19732k}{6561}\textbf{u}_{\tau}^n
-\frac{25360k}{2187}\textbf{u}_{\tau}^{n+1/7}+
\frac{64448k}{6561}\textbf{u}_{\tau}^{n+2/7}-\frac{212k}{729}\textbf{u}_{\tau}^{n+3/7},
\end{equation}
\begin{align}
\textbf{w}_{\tau}^{n+3/7} =\bm{\xi}_1^{n+3/7} A_{w,y}^{-1}\left(B_{w,y}\textbf{w}^{n+3/7}+\textbf{b}_w^{n+3/7}\right)&+\bm{\xi}_3^{n+3/7} A_u^{-1}\left(B_u\textbf{u}^{n+3/7}+\textbf{b}_u^{n+3/7}\right)\nonumber
\\&\nonumber\\&-\bm{\xi}_4^{n+3/7}\textbf{w}^{n+3/7},
\end{align}
\begin{equation}
\textbf{w}^{n+4/7}= \textbf{w}^n+
\frac{19732k}{6561}\textbf{w}_{\tau}^n
-\frac{25360k}{2187}\textbf{w}_{\tau}^{n+1/7}+
\frac{64448k}{6561}\textbf{w}_{\tau}^{n+2/7}-\frac{212k}{729}\textbf{w}_{\tau}^{n+3/7},
\end{equation}
\begin{equation}
s_{f}(\tau_{n+4/7}) = E-u(\tau_{n+4/7},x_0),  \quad w(\tau_{n+4/7},x_0) = -s_{f}(\tau_{n+4/7}).
\end{equation}
\textbf{Fifth stage}:
\begin{equation}
\frac{s'_f(\tau_{n+4/7})}{s_f(\tau_{n+4/7})} = \frac{\mathcal{M}_2(\tau_{n+4/7};\alpha)-\mathcal{M}_3(\tau_{n+4/7};\alpha)+\mathcal{M}_4(\tau_{n+4/7};\alpha)}{\mathcal{M}_1(\tau_{n+4/7})},
\end{equation}
\begin{equation}
\bm{\xi}_1^{n+4/7} = \frac{\sigma^2 [e^{\textbf{x}}s_f(\tau_{n+4/7})]^\beta}{2} , \qquad \bm{\xi}_2^{n+4/7} = r\bm{1}-\frac{\sigma^2 [e^{\textbf{x}}s_f(\tau_{n+4/7})]^\beta}{2}+\frac{s'_f(\tau_{n+4/7})}{s_f(\tau_{n+4/7})}\bm{1};
\end{equation}
\begin{equation}
\bm{\xi}_3^{n+4/7} = r\bm{1}-\frac{\sigma^2 [e^{\textbf{x}}s_f(\tau_{n+4/7})]^\beta}{2}+\frac{s'_f(\tau_{n+4/7})}{s_f(\tau_{n+4/7})}\bm{1}+\frac{\beta \sigma^2 [e^{\textbf{x}}s_f(\tau_{n+4/7})]^\beta}{2}, 
\end{equation}
\begin{equation}
\bm{\xi}_4^{n+4/7} = r\bm{1}+\frac{\beta \sigma^2 [e^{\textbf{x}}s_f(\tau_{n+4/7})]^\beta}{2};
\end{equation}
\begin{equation}
\textbf{u}_{\tau}^{n+4/7} =\bm{\xi}_1^{n+4/7} A_u^{-1}\left(B_u\textbf{u}^{n+4/7}+\textbf{b}_u^{n+4/7}\right)+\bm{\xi}_2^{n+4/7}\textbf{w}^{n+4/7}-r\textbf{u}^{n+4/7},
\end{equation}
\begin{equation}
\textbf{u}^{n+5/7}= \textbf{u}^n+
\frac{9017k}{3168}\textbf{u}_{\tau}^n
-\frac{355k}{33}\textbf{u}_{\tau}^{n+1/7}+
\frac{46732k}{5247}\textbf{u}_{\tau}^{n+2/7}+\frac{49k}{176}\textbf{u}_{\tau}^{n+3/7}-\frac{5103k}{18656}\textbf{u}_{\tau}^{n+4/7},
\end{equation}
\begin{align}
\textbf{w}_{\tau}^{n+4/7} =\bm{\xi}_1^{n+4/7} A_{w,y}^{-1}\left(B_{w,y}\textbf{w}^{n+4/7}+\textbf{b}_w^{n+4/7}\right)&+\bm{\xi}_3^{n+4/7} A_u^{-1}\left(B_u\textbf{u}^{n+4/7}+\textbf{b}_u^{n+4/7}\right)\nonumber
\\&\nonumber\\&-\bm{\xi}_4^{n+4/7}\textbf{w}^{n+4/7},
\end{align}
\begin{equation}
\textbf{w}^{n+5/7}= \textbf{w}^n+
\frac{9017k}{3168}\textbf{w}_{\tau}^n
-\frac{355k}{33}\textbf{w}_{\tau}^{n+1/7}+
\frac{46732k}{5247}\textbf{w}_{\tau}^{n+2/7}+\frac{49k}{176}\textbf{w}_{\tau}^{n+3/7}-\frac{5103k}{18656}\textbf{w}_{\tau}^{n+4/7},
\end{equation}
\begin{align}
s_{f}(\tau_{n+5/7}) = E-u(\tau_{n+5/7},x_0),  \quad w(\tau_{n+5/7},x_0) = -s_{f}(\tau_{n+5/7}).
\end{align}
\textbf{Sixth stage}:
\begin{equation}
\frac{s'_f(\tau_{n+5/7})}{s_f(\tau_{n+5/7})} = \frac{\mathcal{M}_2(\tau_{n+5/7};\alpha)-\mathcal{M}_3(\tau_{n+5/7};\alpha)+\mathcal{M}_4(\tau_{n+5/7};\alpha)}{\mathcal{M}_1(\tau_{n+5/7})},
\end{equation}
\begin{equation}
\bm{\xi}_1^{n+5/7} = \frac{\sigma^2 [e^{\textbf{x}}s_f(\tau_{n+5/7})]^\beta}{2} , \qquad \bm{\xi}_2^{n+5/7} = r\bm{1}-\frac{\sigma^2 [e^{\textbf{x}}s_f(\tau_{n+5/7})]^\beta}{2}+\frac{s'_f(\tau_{n+5/7})}{s_f(\tau_{n+5/7})}\bm{1};
\end{equation}
\begin{equation}
\bm{\xi}_3^{n+5/7} = r\bm{1}-\frac{\sigma^2 [e^{\textbf{x}}s_f(\tau_{n+5/7})]^\beta}{2}+\frac{s'_f(\tau_{n+5/7})}{s_f(\tau_{n+5/7})}\bm{1}+\frac{\beta \sigma^2 [e^{\textbf{x}}s_f(\tau_{n+5/7})]^\beta}{2}, 
\end{equation}
\begin{equation}
\bm{\xi}_4^{n+5/7} = r\bm{1}+\frac{\beta \sigma^2 [e^{\textbf{x}}s_f(\tau_{n+5/7})]^\beta}{2};
\end{equation}
\begin{equation}
\textbf{u}_{\tau}^{n+5/7} =\bm{\xi}_1^{n+5/7} A_u^{-1}\left(B_u\textbf{u}^{n+5/7}+\textbf{b}_u^{n+5/7}\right)+\bm{\xi}_2^{n+5/7}\textbf{w}^{n+5/7}-r\textbf{u}^{n+5/7},
\end{equation}
\begin{equation}
\textbf{u}^{n+6/7}= \textbf{u}^n+
\frac{35k}{384}\textbf{u}_{\tau}^n
+\frac{500k}{1113}\textbf{u}_{\tau}^{n+1/7}+
\frac{125 k}{192}\textbf{u}_{\tau}^{n+2/7}-\frac{2187 k}{6784}\textbf{u}_{\tau}^{n+3/7}+\frac{11 k}{84}\textbf{u}_{\tau}^{n+4/7},
\end{equation}
\begin{align}
\textbf{w}_{\tau}^{n+5/7} =\bm{\xi}_1^{n+5/7} A_{w,y}^{-1}\left(B_{w,y}\textbf{w}^{n+5/7}+\textbf{b}_w^{n+5/7}\right)&+\bm{\xi}_3^{n+5/7} A_u^{-1}\left(B_u\textbf{u}^{n+5/7}+\textbf{b}_u^{n+5/7}\right)\nonumber
\\&\nonumber\\&-\bm{\xi}_4^{n+5/7}\textbf{w}^{n+5/7},
\end{align}
\begin{equation}
\textbf{w}^{n+6/7}= \textbf{w}^n+
\frac{35k}{384}\textbf{w}_{\tau}^n
+\frac{500k}{1113}\textbf{w}_{\tau}^{n+1/7}+
\frac{125 k}{192}\textbf{w}_{\tau}^{n+2/7}-\frac{2187 k}{6784}\textbf{w}_{\tau}^{n+3/7}+\frac{11 k}{84}\textbf{w}_{\tau}^{n+4/7},
\end{equation}
\begin{align}
s_{f}(\tau_{n+6/7}) = E-u(\tau_{n+6/7},x_0),  \quad w(\tau_{n+6/7},x_0) = -s_{f}(\tau_{n+6/7}).
\end{align}
\textbf{Last stage}:
\begin{equation}
\frac{s'_f(\tau_{n+6/7})}{s_f(\tau_{n+6/7})} = \frac{\mathcal{M}_2(\tau_{n+6/7};\alpha)-\mathcal{M}_3(\tau_{n+6/7};\alpha)+\mathcal{M}_4(\tau_{n+6/7};\alpha)}{\mathcal{M}_1(\tau_{n+6/7})},
\end{equation}
\begin{equation}
\bm{\xi}_1^{n+6/7} = \frac{\sigma^2 [e^{\textbf{x}}s_f(\tau_{n+6/7})]^\beta}{2} , \qquad \bm{\xi}_2^{n+6/7} = r\bm{1}-\frac{\sigma^2 [e^{\textbf{x}}s_f(\tau_{n+6/7})]^\beta}{2}+\frac{s'_f(\tau_{n+6/7})}{s_f(\tau_{n+6/7})}\bm{1};
\end{equation}
\begin{equation}
\bm{\xi}_3^{n+6/7} = r\bm{1}-\frac{\sigma^2 [e^{\textbf{x}}s_f(\tau_{n+6/7})]^\beta}{2}+\frac{s'_f(\tau_{n+6/7})}{s_f(\tau_{n+6/7})}\bm{1}+\frac{\beta \sigma^2 [e^{\textbf{x}}s_f(\tau_{n+6/7})]^\beta}{2}, 
\end{equation}
\begin{equation}
\bm{\xi}_4^{n+6/7} = r\bm{1}+\frac{\beta \sigma^2 [e^{\textbf{x}}s_f(\tau_{n+6/7})]^\beta}{2};
\end{equation}
\begin{equation}
\textbf{u}_{\tau}^{n+6/7} =\bm{\xi}_1^{n+6/7} A_u^{-1}\left(B_u\textbf{u}^{n+6/7}+\textbf{b}_u^{n+6/7}\right)+\bm{\xi}_2^{n+6/7}\textbf{w}^{n+6/7}-r\textbf{u}^{n+6/7},
\end{equation}
\begin{equation}
\textbf{u}^{n+1} = \textbf{u}^{n+6/7}
\end{equation}
\begin{equation}
\bar{\textbf{u}}^{n+1}= \textbf{u}^n+
\frac{5179k}{57600}\textbf{u}_{\tau}^n
+\frac{7571k}{16695}\textbf{u}_{\tau}^{n+2/7}+
\frac{393 k}{640}\textbf{u}_{\tau}^{n+3/7}-\frac{92097 k}{339200}\textbf{u}_{\tau}^{n+4/7}+\frac{187 k}{2100}\textbf{u}_{\tau}^{n+5/7}+\frac{ k}{40}\textbf{u}_{\tau}^{n+6/7},
\end{equation}
\begin{align}
\textbf{w}_{\tau}^{n+5/7} =\bm{\xi}_1^{n+5/7} A_{w,y}^{-1}\left(B_{w,y}\textbf{w}^{n+5/7}+\textbf{b}_w^{n+5/7}\right)&+\bm{\xi}_3^{n+5/7} A_u^{-1}\left(B_u\textbf{u}^{n+5/7}+\textbf{b}_u^{n+5/7}\right)\nonumber
\\&\nonumber\\&-\bm{\xi}_4^{n+5/7}\textbf{w}^{n+5/7},
\end{align}
\begin{equation}
\textbf{w}^{n+1} = \textbf{w}^{n+6/7},
\end{equation}
\begin{align}
\bar{\textbf{w}}^{n+1}= \textbf{w}^n+
\frac{5179k}{57600}\textbf{w}_{\tau}^n
+\frac{7571k}{16695}\textbf{w}_{\tau}^{n+2/7}+
\frac{393 k}{640}\textbf{w}_{\tau}^{n+3/7}&-\frac{92097 k}{339200}\textbf{w}_{\tau}^{n+4/7}\nonumber\\&\nonumber\\
&+\frac{187 k}{2100}\textbf{w}_{\tau}^{n+5/7}+\frac{ k}{40}\textbf{w}_{\tau}^{n+6/7},
\end{align}
\begin{align}
s_{f}(\tau_{n+1}) = E-u(\tau_{n+1},x_0),  \quad w(\tau_{n+1},x_0) = -s_{f}(\tau_{n+1});
\end{align}
\begin{equation}
\bar{s}_{f}(\tau_{n+1}) = E-\bar{u}(\tau_{n+1},x_0),  \quad \bar{w}(\tau_{n+1},x_0) = -\bar{s}_{f}(\tau_{n+1}).
\end{equation}
It is worth mentioning that the approximations $s_{f}(\tau_{n+1})$, $\textbf{u}^{n+1} $, and $\textbf{w}^{n+1}$ are fifth-order accurate in time while $\bar{s}_{f}(\tau_{n+1})$, $\bar{\textbf{u}}^{n+1} $, and $\bar{\textbf{w}}^{n+1}$ are fourth-order accurate in time. The numerical approximations $s_{f}(\tau_{n+1})$, $\textbf{u}^{n+1} $, and $\textbf{w}^{n+1}$ represent the solutions of the early exercise boundary, option value and delta sensitivity, respectively. Moreover, both fifth and fourth-order solutions of the option value and delta sensitivity are used to establish an error threshold for the adaptive selection of the optimal time step at each time level as follows:
\begin{equation}\label{otse1}
e_u = ||\textbf{u}^{n+1}-\bar{\textbf{u}}^{n+1}||_{\infty}, \quad e_w = ||\textbf{w}^{n+1}-\bar{\textbf{w}}^{n+1}||_{\infty}, \quad e = \max \lbrace e_u, e_w \rbrace
\end{equation}
\begin{equation}\label{otse3}
    k_{new}^{n+1}=
    \begin{dcases*} 
        \rho \left( \frac{ \epsilon}{e}\right)^{\frac{1}{4}}k_{old}^{n+1}, & if $e$ $<$ $\epsilon$, \\ 
        \rho \left( \frac{ \epsilon}{e}\right)^{\frac{1}{5}}k_{old}^{n+1}, & if $e$ $\geq$ $\epsilon$.
    \end{dcases*}
\end{equation}
Here, $\epsilon$ denotes a tolerance which will be specified in the numerical experiment section. Furthermore, $0 < \rho \leq 1$. 
\section{Numerical Experiment}\label{sec3}
In this section, to verify the performance and efficiency of our proposed method for solving the system of free boundary CEV model, we will use some existing numerical examples to present our results and compare it with some of the exsiting methods. It is worth mentioning that when $\alpha = 0$, we have a standard dividend or non-dividend paying vanilla American options for which we will not consider in this research work. The numerical experiment was carried out on a work-station with a 12th Gen Intel(R) Core(TM) i7-12700H at 2.30 GHz on a 64-bit Windows 11 operating system. For the sake of brevity, we label our methods as given below:
\begin{itemize} 
\item DCU: Uniform fourth-order compact scheme with uniform high-order analytical approximation and fourth-order Runge-Kutta time integration method. 
\item DCSL: Non-equidistant fourth-order compact scheme with staggered high-order analytical approximation and 5(4) Dormand-Prince embedded pairs. 
\end{itemize} 
\noindent In the first example, we consider the example in the work of work of Nicholls and Sward \cite{Nicholls} with the parameter displayed in Table \ref{OptionV}. 

\begin{table}[H]
\center
\caption{Model parameters.}
\label{OptionV}
\begin{tabular}{|lc|}
 \hline 
 Parameters & Values\\ 
 \hline \hline
 $K$    & 9, 10, 11 \\
 $T$    & 0.2 and 0.5 years\\
 $\sigma$ &  0.2 \\
 $S_0$ &  10\\
$\alpha$ & -$ \frac{1}{3}$\\
 $r$ &  0.05 \\
 $x_M$  &  3.00 \\
\hline
 \end{tabular}  
\end{table}
\noindent Here, we investigated the performance of our numerical approximation with the results obtained from the Binomial tree method and the discontinuous Galerkin method of Nicholls and Sward \cite{Nicholls}. We label the discontinuous Galerkin method of Nicholls and Sward \cite{Nicholls} as "DG" and use their results with the maximum accuracy. Here, we want to mainly focus on the impact of the non-uniform and staggered fourth-order numerical schemes we implemented (based on local mesh refinement strategy) on the numerical approximations of the optimal exercise boundary, option value and delta sensitivity. Here, we intend to verify and validate any important gain in computational runtime and solution accuracy we achieved based on the results in Tables \ref{OptionVab224}-\ref{OptionVab99}.
\begin{table}[H]
\center
\caption{American CEV put option value with $\alpha = -\frac{1}{3}$, $\epsilon = 10^{-6}$, and $S_0= 10$.} 
\label{OptionVab224}
\begin{tabular}{|lccc|}
 \hline 
 $K$ & $B_{1000}$ & $B_{5000}$ & DG \cite{Nicholls} \\ 
 \hline \hline
9 &   0.1385 & 0.1385 & 0.1385 \\
10 &   0.4649 & 0.4649 & 0.4649 \\
11 &   1.0894 & 1.0894 & 1.0894 \\
\hline  
 &  & DCU ($\gamma_1 = 2.0, \gamma_2 = 4.0, \gamma_3 = 6.0, \gamma_4 = 8.0$) &   \\
\hline
&   $h =0.1$ & $h =0.06$ & $h =0.03$ \\
 \hline
9 &   0.1482 & 0.1420 & 0.1391 \\
10 &   0.4813 & 0.4706 & 0.4659 \\
11 &   1.1047 & 1.0935 & 1.0902 \\
\hline  
 &  & DCU ($\gamma_1 = 1.0, \gamma_2 = 2.0, \gamma_3 = 3.0, \gamma_4 = 4.0$) &   \\
\hline
&   $h =0.1$ & $h =0.06$ & $h =0.03$ \\
 \hline
9 &   0.1448 & 0.1416 & 0.1393 \\
10 &   0.4733 & 0.4699 & 0.4663 \\
11 &   1.0966 & 1.0931 & 1.0904 \\
\hline
 &  & DCSL ($\gamma_1 = 1.0, \gamma_2=2.0, \gamma_3 = 3.0, \gamma_4 = 4.0$) & \\
 \hline
 &  $h = 0.1$ & $h =0.06$ & $h =0.03$ \\
\hline
9 &   0.1394 & 0.1385 & 0.1384 \\
10 &   0.4656 & 0.4649 & 0.4649 \\
11 &   1.0903 & 1.0894 & 1.0894 \\
\hline
 &  & DCSL ($\gamma_1 = 0.5, \gamma_2=1.0, \gamma_3 = 1.5, \gamma_4 = 2.0$) & \\
 \hline
 &  $h = 0.1$ & $h =0.06$ & $h =0.03$ \\
\hline
9 &   0.1390 & 0.1386 & 0.1385 \\
10 &   0.4649 & 0.4650 & 0.4649 \\
11 &   1.0898 & 1.0895 & 1.0894 \\
\hline
 \end{tabular}  
\end{table}

\begin{table}[H]
\center
\caption{Delta sensitivity for the American CEV model with $\alpha = -\frac{1}{3}$, $\epsilon = 10^{-6}$, and $S_0= 10$.} 
\label{OptionVabvvb}
\begin{tabular}{|lccc|}
\hline  
  $K$ &  & DCU ($\gamma_1 = 2.0, \gamma_2 = 4.0, \gamma_3 = 6.0, \gamma_4 = 8.0$) &   \\
\hline
&   $h =0.1$ & $h =0.06$ & $h =0.03$ \\
 \hline
9 &   -0.1839 & -0.1800 & -0.1779 \\
10 &   -0.4473 & -0.4424 & -0.4410 \\
11 &   -0.7622 & -0.7569 &  -0.7587 \\
\hline  
 &  & DCU ($\gamma_1 = 1.0, \gamma_2 = 2.0, \gamma_3 = 3.0, \gamma_4 = 4.0$) &   \\
\hline
&   $h =0.1$ & $h =0.06$ & $h =0.03$ \\
 \hline
9 &   -0.1799 & -0.1796 & -0.1780 \\
10 &   -0.4419 & -0.4419 & -0.4409 \\
11 &   -0.7642 & -0.7568 &  -0.7584 \\
\hline
 &  & DCSL ($\gamma_1 = 1.0, \gamma_2=2.0, \gamma_3 = 3.0, \gamma_4 = 4.0$) & \\
 \hline
 &  $h = 0.1$ & $h =0.06$ & $h =0.03$ \\
\hline
9 &   -0.1775 & -0.1775 & -0.1775 \\
10 &   -0.4402 & -0.4409 & -0.4409 \\
11 &   -0.7574 & -0.7594 &  -0.7594 \\
\hline
 &  & DCSL ($\gamma_1 = 0.5, \gamma_2=1.0, \gamma_3 = 1.5, \gamma_4 = 2.0$) & \\
 \hline
 &  $h = 0.1$ & $h =0.06$ & $h =0.03$ \\
\hline
9 &   -0.1773 & -0.1775 & -0.1775 \\
10 &   -0.4404 & -0.4409 & -0.4409 \\
11 &   -0.7582 & -0.7595 &  -0.7594 \\
\hline
 \end{tabular}  
\end{table}

\begin{table}[H]
\center
\caption{Runtime in seconds for the American CEV model with $\alpha = -\frac{1}{3}$, $\epsilon = 10^{-6}$, and $S_0= 10$.} 
\label{OptionVab99}
\begin{tabular}{|lccc|}
\hline  
$K$ &  & DCU ($\gamma_1 = 2.0, \gamma_2 = 4.0, \gamma_3 = 6.0, \gamma_4 = 8.0$) &   \\
\hline
&   $h =0.1$ & $h =0.075$ & $h =0.05$ \\
 \hline
9 &   0.079 & 0.088 & 0.155 \\
10 &   0.074 & 0.086 & 0.153 \\
11 &   0.076 & 0.086 & 0.153 \\
\hline  
 &  & DCU ($\gamma_1 = 1.0, \gamma_2 = 2.0, \gamma_3 = 3.0, \gamma_4 = 4.0$) &   \\
\hline
&   $h =0.1$ & $h =0.075$ & $h =0.05$ \\
 \hline
9 &   0.081 & 0.106 & 0.274 \\
10 &   0.079 & 0.105 & 0.266 \\
11 &   0.082 & 0.103 & 0.262 \\
\hline
 &  & DCSL ($\gamma_1 = 1.0, \gamma_2=2.0, \gamma_3 = 3.0, \gamma_4 = 4.0$) & \\
 \hline
 &  $h = 0.1$ & $h =0.075$ & $h =0.05$ \\
\hline
9 &  0.105 & 0.173 & 0.787 \\
10 &   0.108 & 0.177 & 0.846 \\
11 &   0.111 & 0.182 & 0.921 \\
\hline
 &  & DCSL ($\gamma_1 = 0.5, \gamma_2=1.0, \gamma_3 = 1.5, \gamma_4 = 2.0$) & \\
 \hline
 &  $h = 0.1$ & $h =0.06$ & $h =0.03$ \\
\hline
9 &   0.113 & 0.214 & 1.085 \\
10 &   0.115 & 0.199 & 1.006 \\
11 &   0.111 & 0.206 & 0.958 \\
\hline
 \end{tabular}  
\end{table}
\noindent Our observation is that when we refine the space grid in the locality of the left corner point and further use grid points very close to the left boundary for the approximation of the early exercise boundary based on the staggered representation, we only need coarse grids to achieve very reasonable accuracy. Moreover, we achieve reasonable accuracy with very little computational runtime in seconds even though we approximate the early exercise boundary, option value and delta sensitivity simultaneously. Furthermore, because the Binomial tree solution which serves as a benchmark for this example was obtained with a maximum number of 5000 steps, we envisaged the need to further use a very small step size to compute the numerical solutions of the option value, delta sensitivity and optimal exercise boundary and present result in 9 decimal places as described in Table \ref{OptionVabF} below.
\begin{table}[H]
\center
\caption{Numerical approximation with DCSL ($\gamma_1 = 0.5, \gamma_2 = 1, \gamma_3 = 1.5, \gamma_4 = 2$), $h=0.01$, $\alpha = -\frac{1}{3}$, $\epsilon = 10^{-6}$, and $S_0= 10$.} 
\label{OptionVabF}
\begin{tabular}{|lcc|}
\hline
$K$  & Option value & Delta sensitivity \\
 \hline
9   & 0.13845314 & -0.17749883 \\
10  & 0.46492683 & -0.44091309 \\
11  & 1.08943407 & -0.75936161 \\
\hline
 \end{tabular}  
\end{table}
\noindent Furthermore, we computed the convergence rate of our proposed method with the example in Table \ref{OptionV} above. Because of the adaptive nature of 5(4) embedded pairs, we rather used the fourth-order Runge-Kutta method and non-equidistant fourth-order compact finite difference scheme with the staggered high-order analytical approximation to compute only the convergence rate in space for the optimal exercise boundary. The convergence rate result with the optimal exercise boundary is listed in Table \ref{OptionVabc}. From Table \ref{OptionVabc}, we observed a reasonable convergence rate result. However, we achieve third-order convergence rate result with the non-uniform compact finite difference scheme instead of fourth-order convergence rate result.
\begin{table}[H]
\center
\caption{Maximum errors and convergence rates with the optimal exercise boundary ($T = 0.2$, $k = 2.5 \times 10^{-7}$, $\alpha = -\frac{1}{3}$, and $K= 9$). Here, we use DCSL ($\gamma_1 = 1, \gamma_2 = 2, \gamma_3 = 3, \gamma_4 = 4$).} 
\label{OptionVabc}
\begin{tabular}{|lcc|}
\hline
 $h$ &   Maximum errors & Convergence rate\\
 \hline
0.2 &   &  \\
0.1 &   0.09707782 &  \\
0.05 &   0.01077838 & 3.171 \\
0.025 &     0.00117112 & 3.202 \\
0.0125 &    0.00011777 & 3.314 \\
\hline
 \end{tabular}  
\end{table}
\noindent In the second example, we consider the example in the work of Nunes \cite{Nunes} with the following options parameters listed in Table \ref{OptionVaa22x}.
\begin{table}[H]
\center
\caption{Option parameters.}
\label{OptionVaa22x}
\begin{tabular}{|lc|}
 \hline 
 Parameter & Values\\ 
 \hline \hline
 $K$    & 80, 90, 100, 110, 120\\
 $T$    & 0.5 years\\
 $S_0$    & 100 \\
 $\sigma$ &  0.3  \\
$\alpha$ & $\frac{1}{2}$ \\
 $r$ & 0.07 \\
 $x_M$  &  3.00 \\
\hline
 \end{tabular}  
\end{table}
\noindent It is worth mentioning that Nunes \cite{Nunes} considered the following stochastic differential equation.
\begin{equation} 
dS_t= \mu S_tdt + \sigma S^{\frac{\beta}{2} }_tdW(t). 
\end{equation}
However, we consider
\begin{equation} 
dS_t= \mu S_tdt + \sigma S^{\alpha+1}_t dW(t). \quad  \alpha \in \mathbb{Q}.
\end{equation}
The discrepancy given in the work of Nunes \cite{Nunes} can be corrected in our work for which we did accordingly below. if we set  $\alpha+1 = \beta/2$, it implies that $\alpha = \beta/2-1$.
Nunes \cite{Nunes} used $\beta= 3$ in their work which corresponds to $\alpha = 0.5$. Hence, the reason for slight change in the $\alpha$ in Table \ref{OptionVaba} to be consistent with the one presented in the work of Nunes \cite{Nunes}. The authors used the result obtained from a second-order Crank-Nicolson method with 15,000 time intervals
and 10,000 space steps as the benchmark value. Here we will verify the consistency in our result with the benchamrk value and the method suggested by Detemple and Tian \cite{Detemple} labelled as "DT" and Kim and Yu \cite{Kim} labelled as "KY" which are listed in the work of Nunes \cite{Nunes}. The numerical approximation and the computational runtime results are listed in Tables \ref{OptionVaba}-\ref{OptnVb} below.
\begin{table}[H]
\center
\caption{American CEV put option value with $\alpha = \frac{1}{2}$, $\epsilon = 10^{-6}$, and $S_0= 100$.} 
\label{OptionVaba}
\begin{tabular}{|lccc|}
 \hline 
 $K$ & Crank-Nicholson & DT & KY \\ 
 \hline \hline
80 &   0.852 & 0.851 & 0.851 \\
90 &   2.969 &  2.968 & 2.969 \\
100 &  7.060 & 7.060 & 7.060 \\
110  &  13.175 &  13.174 & 13.172 \\
120 &  20.992 & 20.989 & 20.993 \\
\hline  
 &  & DCSL ($\gamma_1 = 0.5, \gamma_2=1, \gamma_3 = 1.5, \gamma_4 = 2$) & \\
 \hline
$h$ &  0.1 & 0.075 & 0.03 \\
\hline
80 &  0.854 & 0.853 & 0.852 \\
90 &  2.971 & 2.969 & 2.969 \\
100 &  7.061 & 7.061 & 7.060 \\
110  & 13.174 & 13.176 & 13.175 \\
120 &  20.991 & 20.992 & 20.992 \\
\hline
 \end{tabular}  
\end{table}

\begin{table}[H]
\center
\caption{Delta sensitivity for the American CEV model with $\alpha = 0.5$, $\epsilon = 10^{-6}$, and $S_0= 100$.} 
\label{OptionVab}
\begin{tabular}{|lccc|}
\hline  
 &  & DCSL ($\gamma_1 = 0.5, \gamma_2=1, \gamma_3 = 1.5, \gamma_4 = 2$) & \\
 \hline
$h$ &  0.1 & 0.075 & 0.03 \\
\hline
80 &  -0.079 & -0.079 & -0.079 \\
90 &  -0.214 & -0.214 & -0.214 \\
100 &  -0.402 & -0.402 & -0.402  \\
110  & -0.602 & -0.603 & -0.602 \\
120 &  -0.785 & -0.785 & -0.785 \\
\hline
 \end{tabular}  
\end{table}

\begin{table}[H]
\center
\caption{Runtime in seconds for the American CEV model with $\alpha = 0.5$, $\epsilon = 10^{-6}$, and $S_0= 100$.} 
\label{OptnVb}
\begin{tabular}{|lccc|}
\hline  
 &  & DCSL ($\gamma_1 = 0.5, \gamma_2=1, \gamma_3 = 1.5, \gamma_4 = 2$) & \\
 \hline
$h$ &  0.1 & 0.075 & 0.03 \\
\hline
80 & 0.144 & 0.195 & 1.090 \\
90 &  0.143 & 0.200 & 1.218 \\
100 &  0.153 & 0.210 & 1.306  \\
110  & 0.177 & 0.223 & 1.388 \\
120 &  0.293 & 0.231 & 1.467 \\
\hline
 \end{tabular}  
\end{table}
\noindent From Table \ref{OptionVaba}, we observe a reasonable match between our result and the well performing results presented in Table \ref{OptionVaba}. Moreover, the advantage of our proposed method is that the option value, delta sensitivity and the optimal exercise boundary are approximated with a reasonable accuracy using very small step sizes when compared with the benchmark method. This is largely due to the implementation of fourth-order compact finite difference scheme in space and fifth-order 5(4) embedded pair for time integration. Furthermore, several grid refinements for the approximations of the early exercise boundary, option value and the delta sensitivity further contributed to the overall enhancement of our proposed method.  Moreover, only little computational runtime is required to achieve this reasonable accuracy as shown in Table \ref{OptnVb}.\\

\noindent In this last example, we investigate the American style CEV model with integer elasticity $\alpha$. Consider the example in the work of Lee et al. \cite{Lee} with the options parameter listed in Table \ref{OptionVaa223b}.
\begin{table}[H]
\center
\caption{Option parameters.}
\label{OptionVaa223b}
\begin{tabular}{|lc|}
 \hline 
 Parameter & Values\\ 
 \hline \hline
 $K$    & 35,40,45\\
 $T$    & 3.0 years\\
 $S_0$    & 40 \\
 $\sigma$ &  0.3  \\
$\alpha$ &  -1.0 \\
 $r$ & 0.05 \\
 $x_M$  &  3.00 \\
\hline
 \end{tabular}  
\end{table}
\noindent Here, we first present the result of the option value and delta sensitivity and verify our result with the Crank-Nicholson finite difference (CFDM), the method of Lee \cite{Lee} for which we label as TM, artificial boundary condition (ABC) method of Wong and Zhao \cite{Wongb}, and the Binomial tree method with 3000 steps presented in the work of. For the finite difference method of Lee \cite{Lee} and ABC method of Wong and Zhao \cite{Wongb}, we use the numerical approximations with the maximum accuracy as presented in the work of Lee \cite{Lee}. The results are displayed in Table \ref{OptVab224}. Similar to our observations with DCSL in the previous examples, reasonable solution accuracy is achieved with a very large step sizes, an important gain that could reduce space complexity when extended to high-dimensional pricing problems.
\begin{table}[H]
\center
\caption{American CEV put option value with $\alpha = -1$, $\epsilon = 10^{-6}$, and $S_0 = 40$.} 
\label{OptVab224}
\begin{tabular}{|lccc|}
 \hline 
 $K$ & CFDM & TM\cite{Lee}  & ABC \cite{Wongb} \\ 
 \hline 
 35 &   4.040 & 4.034 & 4.040 \\
40 &   5.792 & 5.791 & 5.792 \\
45 &   8.113 & 8.113 & 8.113 \\
\hline
 &  & DCSL ($\gamma_1 = 0.5, \gamma_2=1.0, \gamma_3 = 1.5, \gamma_4 = 2.0$) & \\
 \hline
 &  $h = 0.1$ & $h =0.06$ & $h =0.03$ \\
\hline
35 &   4.041 & 4.041 & 4.041 \\
40 &   5.792 & 5.792 & 5.792 \\
45 &  8.114 & 8.114 & 8.114 \\
\hline
 \end{tabular}  
\end{table}
\begin{figure}[H]
    \centering
    {\includegraphics[width=1.0\textwidth]{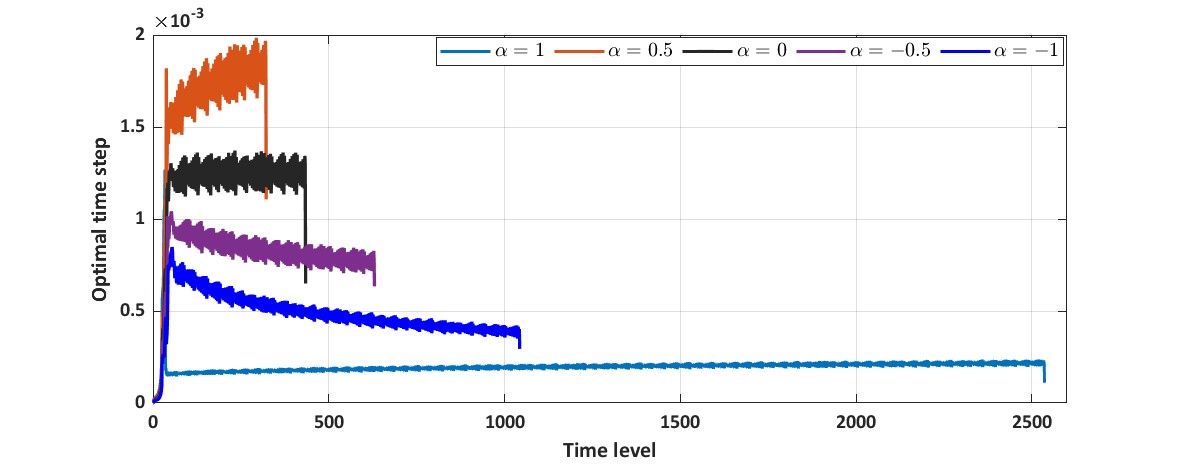}} 
    \caption{Optimal time stepping for each time level with varying $\alpha$.}
    \label{fig:Local1}
\end{figure}
\noindent Furthermore, we present the plot profile of the optimal time step with DCSL at each time level and observe how the integer elasticity impacts on the selection of the optimal time step at each time level based on the 5(4) Dormand-Prince embedded pairs. For a fixed step size, our observation is that as the absolute value of $\alpha$ increases, the optimal time step for each time level reduces, hence, requiring more computational runtime and number of time levels.\\

\noindent Finally, we vary the parameters of the 5(4) Dormand-Prince embedded pairs and further observe how they impact on the solution accuracy of our numerical approximation. Precisely, we vary $\epsilon$ and $\rho$. The results are listed in Tables \ref{OptVb224} and \ref{OptVb225}. We observe that the solution accuracy is not susceptible to  $\rho$ and vary substantially with $\epsilon$.
\begin{table}[H]
\center
\caption{American CEV put option value with varying $\epsilon$ ($\alpha = -1, h=0.05, \rho = 0.9$).} 
\label{OptVb224}
\begin{tabular}{|lccc|}
 \hline 
$K/S_0$ &  & DCSL ($\gamma_1 = 0.5, \gamma_2=1.0, \gamma_3 = 1.5, \gamma_4 = 2.0$) & \\
 \hline \hline
 &  $ \epsilon = 10^{-4}$ & $\epsilon = 10^{-5}$ & $\epsilon = 10^{-6}$ \\
\hline
35/40 &   4.3855 & 4.0429 & 4.0408 \\
\hline
 \end{tabular}  
\end{table}

\begin{table}[H]
\center
\caption{American CEV put option value with varying $\rho$ ($\alpha = -1, h=0.05, \epsilon = 10^{-6}$).} 
\label{OptVb225}
\begin{tabular}{|lccc|}
 \hline 
  $K/S_0$ &  & DCSL ($\gamma_1 = 0.5, \gamma_2=1.0, \gamma_3 = 1.5, \gamma_4 = 2.0$) & \\
 \hline \hline
 &  $ \rho = 0.3$ & $ \rho = 0.6$ & $ \rho = 0.9$ \\
\hline
35/40 &   4.0408 & 4.0408 & 4.0408 \\
\hline
 \end{tabular}  
\end{table}
\section{Conclusion} \label{sec4}
We have proposed a high-order uniform and staggered numerical schemes on a locally adapted space grid with adaptive time stepping for approximating the American style CEV model. We fully demonstrated in our work that even though the uniform space grid scheme performed reasonable well, formulation of staggered implementation substantially improved accuracy in coarse grids and reduced computational runtime. Our model is highly nonlinear and a bit complex and involves a system of free boundary partial differential equations. Hence, our numerical formulation is extendible and presents an efficient and alternative numerical method for solving other linear and nonlinear system of PDEs with or without free boundary in the field of finance and other disciplines. Furthermore, with the non-uniform and staggered implementation, we have shown that we can achieve very reasonable accuracy with coarse grids (large step sizes), hence improving space complexity and can further be extendible to a reasonable high dimensional problems.\\   

\noindent In our future work, we hope to extend our proposed method for solving regime switching CEV model which was formulated and described in the work of Elliot et al. \cite{Elliott}. In our context, it will involve multi-free boundaries. We hope to leverage the benefit of our high-order staggered analytical approximation, local mesh refinement and adaptive strategy to solve such free boundary regime switching options with very coarse grids. This will allow us to approximate multiple regime problems with reasonable solution accuracy in very coarse grids. Furthermore, it is also in our interest to understand if we can stretch our numerical methodology and model transformation procedure to further solve other related free boundary local volatility models like the VIX model, nonlinear mean reverting models, and other local volatility models involving Pearson processes as presented in the work of Leonenko and Phillips \cite{Leonenko}.

\section*{Acknowledgement}
The first author is funded in part by an NSERC Discovery Grant.

%

\end{document}